\def \eqref#1{(\ref{#1})}
\newcommand{\eqa}{\begin{eqnarray}}
\newcommand{\eeqa}{\end{eqnarray}}
\newcommand{\beq}{\begin{equation}}
\newcommand{\eeq}{\end{equation}}
\newtheorem{dfn}{Definition}[section]
\newtheorem{thm}[dfn]{Theorem}
\newtheorem{lem}[dfn]{Lemma}
\newtheorem{prp}[dfn]{Proposition}
\def \= {\; =\; }
\def \:= {\, : =\, }
\newcommand{\+} {\; + \; }
\newcommand{\nn}{\nonumber}
\newcommand{\f}{\frac}
\newcommand{\e}{\epsilon}
\newcommand{\al}{\alpha}
\newcommand{\p}{\partial}
\newcommand{\F}{{\mathcal{F}}}
\begin{document}
\title{Hamiltonian perturbations at the second order approximation}
\author{Di Yang\,\footnote{diyang@ustc.edu.cn}\\
\small{School of Mathematical Sciences, USTC, Hefei 230026, P.R.~China}}
\date{}
\maketitle
\begin{abstract}
Integrability condition of Hamiltonian perturbations of 
integrable Hamiltonian PDEs of hydrodynamic type up to the 
second order approximation is considered.
\end{abstract}


{\small \noindent{\bf Keywords:} Hamiltonian perturbation, quasi-triviality, integrability, Riemann invariants.}


\section{Introduction}
Let $M$ be an $n$-dimensional complex manifold. Consider the following system of Hamiltonian PDEs of hydrodynamic type:
\begin{equation}\label{dispersionless}
\p_t  \bigl(v^\alpha\bigr) \= \eta^{\alpha\beta} \p_x \biggl( \frac{\delta H_0}{\delta v^{\beta}(x)} \biggr) \, , 
\qquad v=(v^1,...,v^n)\in M,\, x\in S^1,\,t\in\mathbb{R} \, , 
\end{equation}
where $(\eta^{\alpha\beta})$ is a given symmetric invertible constant matrix,
$H_0 := \int_{S^1} h_0(v)\,dx$ is a given local functional (called the Hamiltonian), and 
$\delta/\delta v^\beta(x)$ denote the variational derivative. 
Here and below, free Greek indices take the integer values $1,\dots,n$, and the Einstein summation convention is assumed for repeated Greek indices with one-up and one-down; the matrix~$(\eta^{\al\beta})$ and its inverse~$(\eta_{\al\beta})$ 
are used to raise and lower Greek indices, e.g. 
$v_\alpha:=\eta_{\alpha\beta}v^{\beta}$.
The Hamiltonian density~$h_0(v)$ is assumed to be a holomorphic function of~$v$. More explicitly, 
equations~\eqref{dispersionless} have the form:
\[\p_t \bigl(v^\alpha\bigr) \= A^\alpha_\gamma(v)\,v^\gamma_x\,, \quad 
{\rm where} ~A^\alpha_\gamma(v) \:= \eta^{\alpha\beta} \frac{\p^2 h_0}{\p v^{\beta}\p v^{\gamma}}\,.\]

\noindent {\bf Basic assumption:} $(A^\alpha_\gamma(v))$ 
has pairwise distinct eigenvalues $\lambda_1(v),\dots,\lambda_n(v)$ on an open dense subset~$U$ of~$M$. 

Let us perform a change of variables $(v^1,\dots,v^n)\rightarrow (R_1,\dots,R_n)$ with non-degenerate Jacobian 
on~$U$. We call $R_1,\dots,R_n$ a complete set of Riemann invariants, if evolutions along $R_1,\dots,R_n$ are all diagonal, namely, 
\begin{equation} \label{rie-def}
\p_t (R_i) \= V_i(R) \, \p_x (R_i)\,, \qquad i=1,\dots,n\,,
\end{equation}
where $V_i$'s are some functions of $R=(R_1,...,R_n)$. 
Below, free Latin indices take the integer values $1,\dots,n$ unless otherwise indicated. Clearly, equations~\eqref{rie-def} imply that 
the gradients of Riemann invariants are eigenvectors of~$(A^{\alpha}_\beta)$, namely,
\begin{equation}
A^\alpha_\beta\,R_{i,\alpha} \= \lambda_i \,R_{i,\beta}\,,\qquad V_i \= \lambda_i
\end{equation}
with $R_{i,\alpha}:=\p_\alpha (R_i)$. Similar notations like $R_{i,j}:=\p_j(R_i)$, $R_{i,jk}:=\p_j \p_k (R_i)$, $\cdots$ will also be used. 
Here and below, $\p_\alpha:=\p_{v^\alpha}$, $\p_i:=\p_{R_i}$.

It was proven by Tsarev~\cite{Tsarev} that the integrability of equations~\eqref{dispersionless} is equivalent to the existence of complete Riemann invariants. 
Here, ``integrability" means existence of sufficiently many conservation laws/infinitesimal symmetries (See Definition~\ref{integrable}). 
It was shown by B.~Dubrovin~\cite{Du3, Du4} that existence of a complete set of Riemann invariants 
is equivalent to {\it vanishing} of the following Haantjes tensor:
\beq\label{Haa}
H_{\alpha\beta\gamma} \:= \bigl(A_{\alpha \rho\sigma} A_{\beta\phi} A_{\gamma \psi}+A_{\beta \rho\sigma} A_{\gamma\phi} A_{\alpha \psi}+A_{\gamma \rho\sigma} A_{\alpha\phi} A_{\beta \psi}\bigr) 
\, A^\rho_\nu \, \delta^{\sigma\nu\psi\phi} \, ,
\eeq
where $A_{\alpha\beta\gamma}:=\p_\alpha \p_\beta \p_\gamma (h_0)$ and
$\delta^{\alpha\beta\gamma\phi} := \eta^{\alpha\gamma} \eta^{\beta\phi}  \,-\,  \eta^{\alpha\phi} \eta^{\beta\gamma}$.
Note that $H_{\alpha\beta\gamma}$ automatically vanishes if the signature $\varepsilon(\alpha,\beta,\gamma)=0$; 
so for $n=1$ or for $n=2$, the system~\eqref{dispersionless} is always integrable. 

We proceed to the study of Hamiltonian perturbations \cite{DMS,Du1,Du2,Du3,Du4,DZ-norm,Getzler} of~\eqref{dispersionless} 
\begin{equation}\label{with_dispersion}
\p_t (v^\alpha) \= \eta^{\alpha\beta} \p_x \biggl(\frac{\delta H}{\delta v^{\beta}(x)}\biggr)\,,\qquad x\in S^1\,, ~ t\in\mathbb{R}\,, ~ v=(v^1,\dots,v^n)\in M \,.
\end{equation}
Here,  $ H := \int_{S^1} h\,dx = \sum_{j=0}^\infty \e^j H_j$ is the Hamiltonian with $H_j := \int_{S^1} h_j(v,v_1,v_2,...,v_j)\,dx$,    
and $h_j$ are differential polynomials of~$v$ satisfying the following homogeneity condition:
\beq
\sum_{\ell=1}^j \ell\,v^\alpha_\ell \frac{\p h_j}{\p v^\alpha_\ell} \= j\, h_j \,, \quad j\geq 0\,.
\eeq
We recall that the variational derivative reads
$$
\frac{\delta H}{\delta v^{\beta}(x)} \=  \sum_{\ell=0}^\infty (-\p_x)^\ell \biggl(\frac{\p h}{\p v^{\beta}_\ell}\biggr) \,.
$$
In the above formulae, $v_\ell^\alpha:=\p_x^\ell (v^\alpha)$, $\ell\geq 0$, 
and we recall that a differential polynomial of~$v$ is a polynomial of $v_1,v_2,\dots$ whose coefficients are holomorphic functions of~$v$.
The ring of differential polynomials of $v$ is denoted by $\mathcal{A}_v$. 
Note that the Hamiltonian operator $\eta^{\alpha\beta}\p_x$ defines a Poisson bracket $\{\,,\,\}$ on the space of local functionals $\mathcal{F}:=\bigl\{\int_{S^1} f \,dx \,| \, f\in \mathcal{A}_v[[\e]]\bigr\}$, 
$\{\,,\,\}: \mathcal{F}\times \mathcal{F} \rightarrow \mathcal{F}$, by
\begin{equation}\label{Poisson}
\{F,G\} \:= \int_{S^1} \frac{\delta F}{\delta v^{\alpha}(x)}\eta^{\alpha\beta}\p_x \biggl(\frac{\delta G}{\delta v^{\beta}(x)}\biggr)\,dx \,, \qquad \forall~ F,G\in \F \,.
\end{equation}
It is helpful to view $v^\alpha(x)$ as a ``local functional" 
$v^\alpha(x)=\int_{S^1} v^\alpha(y)\, \delta(y-x)\, dy$, called the coordinate functional. Then one can write 
equation~\eqref{with_dispersion} in the form
$$
\p_t (v^\alpha) \= \bigl\{ v^\alpha(x) \,, \, H\bigr\} \,.
$$
Clearly, a system of Hamiltonian PDEs of hydrodynamic type~\eqref{dispersionless} can be obtained from~\eqref{with_dispersion} simply 
by taking the dispersionless limit: $\epsilon\rightarrow 0$. 

The perturbed system~\eqref{with_dispersion} is called \emph{integrable} if its dispersionless limit is integrable and each 
conservation law of~\eqref{dispersionless} can be extended to a conservation law of~\eqref{with_dispersion}. 
In this paper, we start with a system of {\it integrable Hamiltonian PDEs of hydrodynamic type}, and 
study the conditions such that the perturbation~\eqref{with_dispersion} is integrable up to the second order approximation. 

\begin{thm}\label{thm10}
Assume that \eqref{dispersionless} is integrable.  Denote by~$U$ the open dense subset under consideration, by $\lambda_1,\dots,\lambda_n$ the 
distinct eigenvalues of~$(A^{\alpha}_\beta)$, and by $R=\bigl(R_1,\dots,R_n\bigr)$ the associated complete Riemann invariants. 
A Hamiltonian perturbation of~\eqref{dispersionless} of the form
$H = H_0+\e\,H_1+\mathcal{O}(\e^2)$
with $H_0 = \int_{S^1} h(v)\,dx$, $H_1 = \int_{S^1} \sum_{i=1}^n p_i(R)\,{R_i}_x\, dx$
is integrable at the first order approximation iff
\begin{equation}\label{first-order-cond}
\omega_{ij,k} \,-\, \omega_{ik,j} \= a_{ij}\,\omega_{ik} \+ a_{ji}\,\omega_{jk} \,- \, a_{ik}\,\omega_{ij} \,- \, a_{ki}\,\omega_{kj} \,,\qquad\forall\,\varepsilon(i,j,k)=\pm1 \,.
\end{equation}
Here, $a_{ij}$ and $\omega_{ij}$ are defined by
\beq
a_{ij} \:= \frac{\lambda_{i,j}}{\lambda_i-\lambda_j}\,, \quad \omega_{ij} \:= \frac{p_{i,j}-p_{j,i}}{\lambda_i-\lambda_j}\,,\qquad\forall~i\neq j \, .
\eeq
Assume that  $\lambda_{i,i}(v) \neq 0$ for $v\in U$. If a Hamiltonian perturbation of~\eqref{dispersionless}  
of the form
\begin{equation}
H \= H_0 \+ \e^2\,H_2 \+ \mathcal{O}(\e^3) \label{sec}
\end{equation}
with
$H_0 = \int_{S^1} h_0(u)\,dx$, $H_2 = \int_{S^1} \sum_{i,j=1}^n d_{ij}(R)\, {R_i}_x {R_j}_x\, dx$ $(d_{ij}=d_{ji})$
is integrable, then there exist functions~$C_i(R_i)$ such that 
\begin{align}
& d_{ii} \= - \, C_i(R_i) \, \lambda_{i,i} \, , \label{d1} \\
& \biggl(\frac{d_{ij}}{\lambda_i-\lambda_j}\biggr)_{,\,k} \+ \biggl(\frac{d_{jk}}{\lambda_j-\lambda_k}\biggr)_{,\,i} \+ \biggl(\frac{d_{ki}}{\lambda_k-\lambda_i}\biggr)_{,\,j} \= 0\,,\qquad \forall~\varepsilon(i,j,k)=\pm 1 \,. \label{d2}
\end{align}
\end{thm}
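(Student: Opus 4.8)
The plan is to prove Theorem~\ref{thm10} through a single mechanism. A conservation law $\bar H$ of the perturbed system~\eqref{with_dispersion} reduces, as $\e\to0$, to a conservation law $\bar H_0$ of~\eqref{dispersionless}, and integrability demands that \emph{every} such $\bar H_0$ extend. By Tsarev's theory the dispersionless conservation laws correspond to symmetry speeds $\mu=(\mu_1,\dots,\mu_n)$ solving $\mu_{i,j}=a_{ij}(\mu_i-\mu_j)$ for $i\neq j$; these form an infinite-dimensional family, and the freedom to vary $\mu$ inside it will be used throughout. Writing $\bar H=\bar H_0+\e\,\bar H_1+\e^2\,\bar H_2+\mathcal O(\e^3)$ and expanding $\{H,\bar H\}=0$ in powers of $\e$ produces a chain of linear (``homological'') equations for the $\bar H_j$, whose solvability for all admissible $\mu$ is exactly integrability. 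For the first-order statement (where $H=H_0+\e H_1$) the order-$\e$ equation $\{H_0,\bar H_1\}+\{H_1,\bar H_0\}=0$, computed in Riemann invariants with $\bar H_1=\int\sum_i\bar p_i R_{i,x}\,dx$ and matched monomial by monomial, is compatible for all $\mu$ iff~\eqref{first-order-cond} holds; the ratio $\omega_{ij}=(p_{i,j}-p_{j,i})/(\lambda_i-\lambda_j)$ arises because only the antisymmetric part of $p$ survives in the bracket and the coefficients $a_{ij}$ recombine it into this form.

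For the second-order statement, $H=H_0+\e^2 H_2$ has no $\e^1$ term, so $\bar H_1$ decouples and may be taken to vanish; the one substantive equation is at order $\e^2$,
\[
\{H_0,\bar H_2\}\+\{H_2,\bar H_0\}\=0 \, ,
\]
a linear equation for $\bar H_2=\int\sum_{ij}\bar d_{ij}(R)R_{i,x}R_{j,x}\,dx$ with inhomogeneity fixed by $H_2$ and $\mu$. Equivalently, writing the perturbed flow and its deformed symmetry as $\p_t R_i=\lambda_i R_{i,x}+\e^2 P_i$ and $\p_s R_i=\mu_i R_{i,x}+\e^2 Q_i$, this is the order-$\e^2$ vanishing of the commutator $[\p_t,\p_s]R_i$, with $P_i$ determined by $H_2$ and $Q_i$ by $\bar H_2$. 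I would exploit that a Hamiltonian system of hydrodynamic type has a diagonal flat metric in Riemann invariants, keeping all brackets and flows in diagonal-plus-Christoffel form, and then expand both sides in the degree-three monomials $R_{i,x}R_{j,x}R_{k,x}$, $R_{i,x}R_{j,xx}$, $R_{i,xxx}$ and equate coefficients.

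Separating the resulting relations by index type should produce the two assertions. The single-index sector, after substituting $\lambda_{i,j}=a_{ij}(\lambda_i-\lambda_j)$ and $\mu_{i,j}=a_{ij}(\mu_i-\mu_j)$, is expected to force $\p_j\!\bigl(d_{ii}/\lambda_{i,i}\bigr)=0$ for all $j\neq i$; since $\lambda_{i,i}\neq0$ on $U$ one may divide, conclude that $d_{ii}/\lambda_{i,i}$ depends on $R_i$ alone, and set $C_i(R_i):=-\,d_{ii}/\lambda_{i,i}$ to obtain~\eqref{d1}. The sector with three distinct indices, after eliminating the auxiliary unknowns $\bar d_{ij}$ and keeping only the part forced for every $\mu$, should collapse to the cyclic closedness condition~\eqref{d2}, with the ratios $d_{ij}/(\lambda_i-\lambda_j)$ appearing for the same reason as $\omega_{ij}$ above.

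The main obstacle is the bookkeeping of this homological equation together with the elimination of the auxiliary coefficients $\bar d_{ij}$: I must check that the conditions required to solve for $\bar H_2$ \emph{simultaneously for all} $\mu$ are exactly~\eqref{d1}--\eqref{d2} and nothing stronger. The real difficulty is to disentangle the relations that merely determine $\bar d_{ij}$, and so constrain nothing, from those that genuinely constrain $d_{ij}$; this is where varying $\mu$ through the infinite-dimensional solution space of Tsarev's system does the decisive work, isolating the $\mu$-independent constraints. The hypothesis $\lambda_{i,i}\neq0$ enters at precisely one point, legitimizing the division that produces $C_i(R_i)$, and is therefore indispensable to the diagonal conclusion~\eqref{d1}.
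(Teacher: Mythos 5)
Your proposal is correct and follows essentially the same route as the paper's proof: expand $\{H,F\}=0$ order by order, rewrite the order-$\e$ and order-$\e^2$ equations in Riemann invariants, and use the freedom of varying $\mu$ over the infinite-dimensional family of Tsarev solutions to isolate the $\mu$-independent constraints, with $\lambda_{i,i}\neq 0$ used only to legitimize the division in the diagonal sector. The computations you defer are exactly the paper's index specializations of~\eqref{ijk}: the case $j=l=i$ yields $\lambda_{i,i}D_{ii}=\mu_{i,i}d_{ii}$, the case $l=j\neq i$ yields $C_{j,i}=0$ and hence~\eqref{d1}, and pairwise distinct indices, after eliminating the auxiliary unknowns via $D_{ij}=s_{ij}(\mu_i-\mu_j)$, $d_{ij}=s_{ij}(\lambda_i-\lambda_j)$, yield~\eqref{d2}.
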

\noindent An equivalent description of~\eqref{d1}--\eqref{d2} is that the density~$h_2$ can be written in the form
\begin{equation}\label{sec-h2}
h_2 \= - \, \sum_{i=1}^n C_i(R_i)\, \lambda_{i,i} \, {R_i}_x^2 \+ \frac{1}{2}\sum_{i\neq j}\, (\lambda_i-\lambda_j)\, s_{ij}\, {R_i}_x \, {R_j}_x \, ,
\end{equation}
where $s_{ij}:=\phi_{i,j}-\phi_{j,i}$ for some functions~$\phi_i(R)$.

One important tool of studying Hamiltonian perturbations is to use Miura-type and quasi-Miura transformations~\cite{DZ-norm}. 
Recall that a Miura-type transformation near identity is given by an invertible map of the form
\begin{equation}\label{Miura}
v\mapsto w\,, \qquad w^\alpha \:= \sum_{j=0}^\infty \, \e^{j} \, W^\alpha_j(v,v_1,...,v_\ell) \,, ~ W^\alpha_0 = v^\alpha \,, 
\end{equation}
where $W^\alpha_j$, $j\geq 0$ are differential polynomials of~$v$ homogeneous of degree~$j$ with respect to the 
degree assignments $\deg v^\alpha_\ell = \ell$,~$\ell\geq 1$.
A Miura-type transformation is called \emph{canonical} if there exists a local functional~$K$, such that 
\begin{equation}\label{cm}
w^\alpha \= v^\alpha \+ \epsilon\, \bigl\{v^{\alpha}(x),K \bigr\} \+ \frac{\epsilon^2}{2!}\,\bigl\{\!\bigl\{v^{\alpha}(x),K \bigr\},K\bigr\} \+ \cdots
\end{equation}
where $K=\sum_{j=0}^\infty \e^j K_j$. Two Hamiltonian perturbations of the same form~\eqref{with_dispersion} are called \emph{equivalent} if they are related via a canonical Miura-type transformation.
A Hamiltonian perturbation~\eqref{with_dispersion} is called \emph{trivial} if it is equivalent to~\eqref{dispersionless}. 

A map of the form~\eqref{Miura} is called a {\it quasi-Miura} transformation, if $W_\ell^\alpha$,~$\ell\geq 1$ are 
allowed to have rational and logarithmic dependence in~$v_x$. 
The Hamiltonian perturbation~\eqref{with_dispersion} 
is called {\it quasi-trivial} or possessing \emph{quasi-triviality}, if it is related via a canonical quasi-Miura transformation to~\eqref{dispersionless}.
The precise definition used in this paper for quasi-Miura transformation is given in Section~3.
Many interesting nonlinear PDE systems possess quasi-triviality; for example, it was shown in~\cite{DLZ} that 
if~\eqref{with_dispersion} is {\it bihamiltonian} then it is quasi-trivial.

The existence of quasi-triviality of~\eqref{with_dispersion} at the second order approximation 
and its relationship with integrability together with an application are described in the following theorem.

\begin{thm}\label{thm1}
Assume \eqref{dispersionless} is integrable and $\lambda_{i,i}$ is nowhere  zero in~$U$. {\rm Part I}. Up to the second order approximation, the Hamiltonian perturbation~\eqref{with_dispersion}
is quasi-trivial iff it is integrable. {\rm Part II}.  
If $H_2$ has the form~\eqref{sec-h2}, then the Hamiltonian perturbation~\eqref{sec} is $\mathcal{O}(\e^2)$-integrable.
\end{thm}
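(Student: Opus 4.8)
The plan is to reduce both parts to a single homological equation and then analyze it in Riemann invariants. A canonical (quasi-)Miura transformation generated by $K=\sum_{j}\e^{j}K_{j}$ changes the Hamiltonian, and trivializing the perturbation reduces, order by order, to solving $\{H_0,K_0\}=H_1$ at order $\e$ and $\{H_0,K_1\}=H_2$ at order $\e^2$ (the latter once $H_1$ has been cleared). The computational engine is a pair of master formulas for $\{H_0,\,\cdot\,\}$, obtained from $\{H_0,F\}=-\tfrac{d}{dt}F$, the diagonal flow $\p_t R_i=\lambda_i R_{i,x}$, and repeated integration by parts; the first reads
\[
\Bigl\{H_0,\int_{S^1}\sum_i\phi_i\,R_{i,x}\,dx\Bigr\}=\int_{S^1}\sum_{i,k}(\lambda_i-\lambda_k)\,\phi_{i,k}\,R_{i,x}R_{k,x}\,dx,
\]
and the second
\[
\Bigl\{H_0,\int_{S^1}\sum_i g_i(R_i)\,\frac{R_{i,xx}}{R_{i,x}}\,dx\Bigr\}=\int_{S^1}\sum_{i,k} g_i'(R_i)\,\lambda_{i,k}\,R_{i,x}R_{k,x}\,dx.
\]
The polynomial generator produces only off-diagonal terms, whereas the rational (hence genuinely quasi-Miura) generator produces the diagonal term $g_i'\lambda_{i,i}R_{i,x}^2$ together with off-diagonal ones; the derivation of the second formula hinges on the exact cancellation of the pieces $R_{i,xxx}/R_{i,x}$ and $R_{i,xx}^2/R_{i,x}^2$.

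For the direction integrable $\Rightarrow$ quasi-trivial of Part~I, I take the conditions \eqref{d1}--\eqref{d2} furnished by Theorem~\ref{thm10} and build $K_1$ as a sum of the two types. Matching the diagonal part forces $g_i'=-C_i$, which is solvable precisely because \eqref{d1} makes $C_i=C_i(R_i)$ a function of the single variable $R_i$ (the hypothesis $\lambda_{i,i}\neq0$ is used here to pass from $d_{ii}=g_i'\lambda_{i,i}$ to $g_i'=-C_i$). The rational part then feeds back an off-diagonal remainder, so the polynomial part must match the \emph{shifted} coefficient $\tilde d_{ik}:=d_{ik}+\tfrac12(C_i\lambda_{i,k}+C_k\lambda_{k,i})$; the equation $\phi_{i,k}-\phi_{k,i}=2\tilde d_{ik}/(\lambda_i-\lambda_k)$ admits a solution $\phi_i$ iff the cyclic sum of $\p_k\bigl[\tilde d_{ij}/(\lambda_i-\lambda_j)\bigr]$ vanishes, and a short computation shows this equals the cyclic sum for $d_{ij}$ (zero by \eqref{d2}) plus terms proportional to $a_{ij,k}-a_{ik,j}$, which vanish by the semi-Hamiltonian identity $a_{ij,k}=a_{ik,j}$ satisfied by every integrable diagonalizable system \cite{Tsarev}. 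The first-order term is handled separately: solving $\{H_0,K_0\}=H_1$ with a degree-zero generator reduces, in Riemann invariants, to a linear system whose solvability is governed by \eqref{first-order-cond}, matching the first-order assertion of Theorem~\ref{thm10}.

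For quasi-trivial $\Rightarrow$ integrable I transport conservation laws through the transformation: if $\{H_0,K_1\}=H_2$ then, for every dispersionless density $P$ with $\{P,H_0\}=0$, the Jacobi identity gives $\{H_0,\{P,K_1\}\}=\{P,H_2\}$, so $\int(\rho+\e^{2}\rho_2)\,dx$ with $\rho_2$ the density of $\{P,K_1\}$ is conserved to $\mathcal O(\e^{2})$. The subtle point is that $K_1$ is only rational while $\rho_2$ must be a differential polynomial; the key lemma is that the rational contributions cancel after integration by parts. For $n=1$ one finds $\delta K_1/\delta v=g''v_x+g'v_{xx}/v_x$, whence $\rho_2=(\rho'\eta)''\,g'\,v_x^{2}$ is manifestly polynomial, and the general case is identical in structure. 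Part~II then follows at once: by the equivalence of \eqref{sec-h2} with \eqref{d1}--\eqref{d2} recorded after Theorem~\ref{thm10}, the hypothesis of Part~II supplies \eqref{d1}--\eqref{d2}, so feeding these into the construction above produces a canonical quasi-Miura trivialization of \eqref{sec}, and conservation-law transport yields $\mathcal O(\e^2)$-integrability.

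The principal obstacle is the second master formula and the bookkeeping it sets in motion. Establishing the cancellation that collapses the rational generator's bracket to $\sum_{i,k}g_i'\lambda_{i,k}R_{i,x}R_{k,x}$ is delicate, and---more importantly---once the diagonal has been cleared by the rational term one must control the off-diagonal feedback it creates and verify that the residual condition on $\phi_i$ is met by \eqref{d2} together with semi-Hamiltonicity. In short, the heart of the matter is to show that the single-variable dependence $C_i=C_i(R_i)$ is simultaneously what lets the diagonal part be absorbed by a rational quasi-Miura generator and what keeps the induced off-diagonal terms compatible, so that $\lambda_{i,i}\neq0$ and \eqref{d1}--\eqref{d2} carve out exactly the locus on which quasi-triviality and $\mathcal O(\e^2)$-integrability coincide.
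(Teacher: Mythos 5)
Your constructive half is essentially the paper's own argument: your two ``master formulas'' are the brackets of $H_0$ with the generators $\phi_i(R)\,{R_i}_x$ and $C_i(R_i)\,{R_i}_x\log {R_i}_x$ (your rational generator $\int g_i(R_i)\,{R_i}_{xx}/{R_i}_x\,dx$ equals $-\int g_i'(R_i)\,{R_i}_x\log{R_i}_x\,dx$ after integration by parts, so the two are the same functional); matching the diagonal via $g_i'=-C_i$ and the off-diagonal via the exact two-form $s_{ij}=\phi_{i,j}-\phi_{j,i}$ reproduces \eqref{dab-expression}, and your reduction of the cyclic obstruction to \eqref{d2} plus Tsarev's identity $a_{ij,k}=a_{ik,j}$ is precisely the computation the paper uses to prove that \eqref{sec-h2} and \eqref{dab-expression} are equivalent. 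The Jacobi-identity transport $F_2:=\{F_0,K_1\}$ is the paper's Lemma~\ref{lemquasiimpliesint}.

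However, there is a genuine gap in your direction ``quasi-trivial $\Rightarrow$ integrable,'' and it is exactly the part of the paper's proof that you omit. Quasi-triviality only provides \emph{some} generator $K_1=\int k_1(u;u_x)\,dx$, rational/logarithmic in $u_x$ and subject to the homogeneity condition \eqref{basic22}, satisfying $\{H_0,K_1\}=H_2$; it does not provide a generator of your separated form $\sum_i g_i(R_i){R_i}_{xx}/{R_i}_x+\phi_i(R){R_i}_x$. Your polynomiality claim for the transported density (``the rational contributions cancel after integration by parts'') is verified only for that special form, and only for $n=1$; for an arbitrary rational $K_1$ there is no reason why $\{F_0,K_1\}$ should have polynomial density, and Definition~\ref{condef} requires polynomial densities for it to count as a conservation law, so the transport alone does not yield integrability. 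To close this direction one must first classify the solutions of the homological equation, which is what the bulk of the paper's Section~3 does: comparing the coefficients of $u^\sigma_{xxx}$ in \eqref{basic2} forces $k_{1,{R_i}_x{R_j}_x}=0$ for $i\neq j$ \eqref{cross}; the homogeneity conditions \eqref{homo-1}--\eqref{homo-2} then force the form \eqref{k1}; and the coefficients of $u^\beta_{xx}$ force $C_{i,j}=0$ for $i\neq j$ and the form \eqref{dab-expression} of $\tilde d_{\alpha\beta}$. Only after this necessity analysis is every quasi-trivializing generator known to be of the nice form, so that the transport argument applies. Your Part~II is unaffected, since there $K_1$ is the one you constructed; but Part~I as written proves only ``integrable $\Rightarrow$ quasi-trivial'' together with a conditional converse, not the stated equivalence.
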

For the cases $n=1,2$, Theorem~\ref{thm1} agrees with the results of~\cite{LZ} and~\cite{Du2}.

\medskip

The article is organized as follows. In Section~\ref{sect2}, we review some terminologies about Hamiltonian PDEs. 
In Section~\ref{sect3}, we study integrability of~\eqref{with_dispersion} up to the second order approximation. 
An example of non-integrable perturbation is given in Section~\ref{sect4}.

\section{Preliminaries}\label{sect2}
In this section, we will recall several terminologies in the theory of Hamiltonian perturbations; more 
terminologies can be found in e.g.~\cite{DZ-norm, Tsarev, DLZ, DuII, Du6, Dulecture,Du3, Serre}.
\begin{dfn}
A local functional $F_0 = \int_{S^1} f_0(v)\,dx$
is called a \emph{conserved quantity} of~\eqref{dispersionless} if
\begin{equation}\label{con1}
\frac{dF_0}{dt} \= 0 \,.
\end{equation}
Here the density $f_0(v)$ is a holomorphic function of~$v$.
\end{dfn}
\noindent We also often call a conserved quantity a conservation law. 
Note that for simplicity we will exclude the degenerate ones with $f_0(v)\equiv{\rm const}$ 
from conservation laws.

Since~\eqref{dispersionless} is a Hamiltonian system, equation~\eqref{con1} can be written equivalently as
\beq
\bigl\{H_0,F_0\bigr\} \= 0 \,,
\eeq
where $\{\,,\,\}$ denotes the Poisson bracket defined in~\eqref{Poisson}. According to Noether's theorem, \eqref{con1} is 
also equivalent to the statement that the following Hamiltonian flow generated by~$F_0$
$$
v^\alpha_s\:= \{v^\alpha(x), F_0\}
$$
commutes with~\eqref{dispersionless}. 
Let $(M_{\alpha\beta})$ denote the Hessian of~$f$, i.e. $M_{\alpha\beta}:=\p_\alpha \p_\beta (f)$.
Equation~\eqref{con1} then reads
\begin{equation} \label{con3}
A^\alpha_\gamma \, M^\gamma_\beta \= M^\alpha_\gamma \, A^\gamma_\beta.
\end{equation}
\begin{dfn}\label{integrable}
The PDE system~\eqref{dispersionless} is called integrable if it possesses an infinite family of conserved quantities  
parametrized by~$n$ arbitrary functions of one variable.
\end{dfn}
A necessary and sufficient condition for integrability of~\eqref{dispersionless} is 
the vanishing of the Haantjes tensor $H_{\alpha\beta\gamma}$~\eqref{Haa} as recalled already in the introduction.
We will assume that \eqref{dispersionless} is integrable and study its perturbations. Recall that vanishing of the Haantjes tensor 
ensures the existence of a complete set of Riemann invariants $\{R_1,...,R_n\}$.  We have
\begin{align}
\label{eigen-eigen}
& A^\alpha_\beta\,R_{i,\alpha}  \=  \lambda_i \,R_{i,\beta} \, ,\\
& M^\alpha_\beta\,R_{i,\alpha} \= \mu_i \,R_{i,\beta} \, .\label{eigen-eigen2}
\end{align}
Here, $\mu_i$ are eigenvalues of $(M^\alpha_\beta)$. 
For a generic conserved quantity $F_0$, the eigenvalues $\mu_1,...,\mu_n$ on the~$U$ are also pairwise distinct. 
In terms of $\lambda_i,\mu_i$ the flow commutativity is equivalent to
\begin{equation}\label{con4}
a_{ij} \= b_{ij}\,,\qquad \forall\,i\neq j \, ,
\end{equation}
where
\begin{equation}\label{ab}
a_{ij} \:= \frac{\lambda_{i,j}}{\lambda_i-\lambda_j} \, , \qquad b_{ij} \:= \frac{\mu_{i,j}}{\mu_i-\mu_j} \, .
\end{equation}
The compatibility condition 
$$
\mu_{i,jk} \= \mu_{i,kj}\,,\qquad\forall\,\varepsilon(i,j,k)=\pm1
$$
for equations \eqref{con4} reads as follows
\begin{equation}\label{comp}
(\mu_i-\mu_k)(a_{ij,k}-a_{ik,j}) \,-\, (\mu_j-\mu_k)(a_{ij,k}+a_{ij}a_{jk}+a_{ik}a_{kj}-a_{ij}a_{ik}) \= 0 \,.
\end{equation}
Definition~\ref{integrable} requires that equation~\eqref{comp} is true for infinitely many $F_0$ 
parametrized by~$n$ arbitrary functions of one variable. So the coefficients of $\mu_i-\mu_k$ and of $\mu_j-\mu_k$ must vanish:
\begin{align}
& a_{ij,k} \,-\, a_{ik,j} \= 0 \, ,\qquad\forall\,\varepsilon(i,j,k)=\pm1 \, ,\label{Tsarev}\\
& a_{ij,k} \+ a_{ij}a_{jk} \+ a_{ik}a_{kj} \,-\, a_{ij}a_{ik} \= 0 \, ,\qquad\forall\,\varepsilon(i,j,k)=\pm1 \, .\label{coTs}
\end{align}
Note that~\eqref{coTs} is implied by equations~\eqref{Tsarev} and~\eqref{ab}.

\begin{dfn}\label{condef}
A local functional $F:= \sum_{j=0}^\infty \e^j F_j$
is called a conserved quantity of~\eqref{with_dispersion}, if
\begin{equation}
\label{int}\frac{d F}{dt} \= 0 \,.
\end{equation}
Here, $F_j=\int_{S^1} f_j(v,v_1,...,v_j)\,dx$, $j\geq 0$ with $f_j$ being differential polynomials of~$v$ homogeneous of degree~$j$.
\end{dfn}
\noindent Conserved quantities (or say conservation laws) considered in this paper are always of the form as in Definition~\ref{condef}.

Equation~\eqref{int} can be equivalently written as 
$$
\{H,F\} \= 0 \, ,
$$
which is recast into an infinite sequence of equations 
\begin{align}
& \{H_0,F_0\} \= 0 \,, \nn \\
& \{H_0,F_1\} \+ \{H_1,F_0\} \= 0 \,, \nn \\
& \{H_0,F_2\} \+ \{H_1,F_1\} \+ \{H_2,F_0\} \= 0 \,, \nn\\
& \, \mbox{etc.}\,.\nn
\end{align}
\begin{dfn}
A Hamiltonian perturbation~\eqref{with_dispersion} is called integrable if its dispersionless limit~\eqref{dispersionless} is integrable 
and generic conservation laws of~\eqref{dispersionless} can be extended to those of~\eqref{with_dispersion}. For $N\geq1$, 
\eqref{with_dispersion} is called $\mathcal{O}(\e^N)$-integrable if its dispersionless limit \eqref{dispersionless} is integrable 
and every generic conservation law $F_0$ of~\eqref{dispersionless} can be extended to a local functional~$F$, s.t.
\beq
\{H,F\} \= \mathcal{O}(\e^{N+1}).
\eeq
\end{dfn}

\section{Proofs of Theorems~\ref{thm10} and~\ref{thm1}} \label{sect3}

In this section, we study integrability of the Hamiltonian system~\eqref{with_dispersion} up to the second order approximation, 
and prove Theorems~\ref{thm10} and~\ref{thm1}.

Assume that \eqref{dispersionless} is integrable. 

We start with the first order approximation.
Let us first look at the integrability condition of the $\mathcal{O}(\e^1)$-approximation.
Denote
\begin{equation}
H \= H_0 \+ \e\,H_1 \+ \mathcal{O}(\e^2) \label{pert-1}
\end{equation}
with $H_1 = \int_{S^1} \tilde p_\alpha(u)\, u^\alpha_x\,dx = \sum_{i=1}^n \int_{S^1} p_i(R)\, {R_i}_x\,dx$.
Here, the functions $p_\alpha$ and $p_i$ are assumed to satisfy
$\tilde p_\alpha =\sum_{i=1}^n p_i R_{i,\alpha}$.

\noindent{\it Proof} of Theorem~\ref{thm10}.  \quad 
Denote by $\tilde \theta_{\alpha\beta}$ the exterior differential of the $1$-form $\tilde p_\alpha du^\alpha$
\beq
\tilde \theta_{\alpha\beta} \= \tilde p_{\alpha,\beta} \,-\, \tilde p_{\beta,\alpha} \, .
\eeq
In the coordinate chart of the Riemann invariants $R_1,...,R_n$, we have
$$
\theta_{ij} \= \p_i u^{\alpha}\, \tilde \theta_{\alpha\beta}\,\p_j u^{\beta} \= p_{i,j} \,-\, p_{j,i} \, .
$$
The $\mathcal{O}(\e^1)$-integrability says any local functional $F_0=\int_{S^1} f(u)\,dx$ satisfying
$$
\{H_0,F_0\} \= 0
$$
can be extended to a local functional
$$
F \= F_0 \+ \e\,F_1 \+ \mathcal{O}(\e^2)
$$
such that
\begin{equation}
\label{first-int-short}
\{H,F\} \= \mathcal{O}(\e^2).
\end{equation}
Here, the local function $F_1$ is of the form 
\beq
F_1 \= \int_{S^1} \tilde q_\alpha(u)\,u^\alpha_x\,dx \= \sum_{i=1}^n \int_{S^1} q_i(R)\,{R_i}_x\,dx \, .
\eeq
Equation~\eqref{first-int-short} reads as follows
$$ \{H_0,F_1\} \+ \{H_1,F_0\} \= 0 \, , $$
which is equivalent to
\begin{equation}\label{first-int-long1}
\tilde \theta_{\alpha\gamma}M^\gamma_\beta \+ \tilde \theta_{\beta\gamma} M^\gamma_\alpha \= 
\tilde \Theta_{\alpha\gamma}A^\gamma_\beta \+ \tilde \Theta_{\beta\gamma}A^\gamma_\alpha
\end{equation}
or, in the coordinate system of the Riemann invariants, to
\begin{equation}
\label{first-int-long2}\frac{\theta_{ij}}{\lambda_i-\lambda_j} \= \frac{\Theta_{ij}}{\mu_i-\mu_j} \,, \qquad \forall~i\neq j.
\end{equation}
Here, $\tilde \Theta_{\alpha\beta}:= \tilde q_{\alpha,\beta}-\tilde q_{\beta,\alpha}$, $\Theta_{ij}:=q_{i,j}-q_{j,i}$. The compatibility condition of~\eqref{first-int-long2} is given by
\beq\label{comp-th}
\Theta_{ij,k} \+ \Theta_{jk,i} \+ \Theta_{ki,j} \= 0 \, ,\quad \forall~\varepsilon(i,j,k)=\pm 1.
\eeq
Introduce the notations 
\beq
\omega_{ij} \= \frac{\theta_{ij}}{\lambda_i-\lambda_j} \,,\qquad i\neq j \, .\eeq
Then equations \eqref{comp-th} imply
$$
\p_k\,[\,\omega_{ij}\,(\mu_i-\mu_j)\,]+\p_i\,[\,\omega_{jk}\,(\mu_j-\mu_k)\,]+\p_j\,[\,\omega_{ki}\,(\mu_k-\mu_i)\,] \= 0 \,,\qquad \forall~\varepsilon(i,j,k)=\pm 1 \, ,
$$
i.e.
\beq \label{omega-cyc}
\omega_{ij,k}\,(\mu_i-\mu_j) \+ \omega_{ij}\,(\mu_{i,k}-\mu_{j,k}) \+ \mbox{cyclic} \= 0 \,,\qquad \forall~\varepsilon(i,j,k)=\pm 1.
\eeq
Substituting equations \eqref{con4},\,\eqref{ab} in equations \eqref{omega-cyc} we obtain
\beq
\omega_{ij,k}\,(\mu_i-\mu_j) \+ \omega_{ij}\,(a_{ik}(\mu_i-\mu_k)-a_{jk}(\mu_j-\mu_k)) \+ \mbox{cyclic} \= 0 \, ,
\eeq
from which we obtain that for any pairwise distinct $i,j,k$,
\begin{align}
& (\mu_i-\mu_k) \bigl(\omega_{ij,k}+\omega_{ij}\,a_{ik}-\omega_{jk}\,a_{ji}+\omega_{jk}\,a_{ki}-\omega_{ki,j}-\omega_{ki}\,a_{ij}\bigr)\nn\\
& \+ (\mu_j-\mu_k) \bigl(-\omega_{ij,k}-\omega_{ij}\,a_{jk}+\omega_{jk,i}+\omega_{jk}\,a_{ji}-\omega_{ki}\,a_{kj}+\omega_{ki}\,a_{ij}\bigr) \= 0 \, .
\end{align}
As a result we conclude that
\begin{align}
& \omega_{ij,k} \+ \omega_{ij}\,a_{ik} \,-\, \omega_{jk}\,a_{ji} \+ \omega_{jk}\,a_{ki}-\omega_{ki,j} \,-\, \omega_{ki}\,a_{ij} \= 0 \, ,\quad\forall\,\varepsilon(i,j,k) =\pm1 \, , \label{repeat1}\\
& - \omega_{ij,k} \,-\, \omega_{ij}\,a_{jk} \+ \omega_{jk,i} \+ \omega_{jk}\,a_{ji}-\omega_{ki}\,a_{kj} \+ \omega_{ki}\,a_{ij} \= 0 \, ,\quad\forall\,\varepsilon(i,j,k)=\pm1 \,. \label{repeat2}
\end{align}
The part on the first order approximation of the theorem is proved. 

In below we will continue to
prove the second part.

Let us consider the condition of quasi-triviality at the first order of approximation. 
We call the Hamiltonian perturbation~\eqref{pert-1} is {\it quasi-trivial}, if there exists a local functional 
$$
K_0 \= \int_{S^1} k_0(v)\,dx
$$
such that
\begin{equation}\label{exist-trivial}
\{H_0,K_0\} \= H_1 \, .
\end{equation}
Clearly, our definition of quasi-triviality at the first order approximation is the same as triviality. 
Equation~\eqref{exist-trivial} is equivalent to the existence of a function $\psi$ satisfying 
\begin{equation}\label{eqv-e-t-1}
\tilde p_\alpha \= \frac{\p k_0}{\p u^\gamma}A^\gamma_\alpha \+ \frac{\p \psi}{\p u^\alpha} \, .
\end{equation}
Eliminating $\psi$ in the above equation we find the following equivalent equation to~\eqref{exist-trivial}:
\begin{equation}\label{eqv-e-t-2}
\tilde \theta_{\alpha\beta} \= \frac{\p^2 k_0}{\p u^\beta\p u^\gamma} A^\gamma_\alpha \,-\, \frac{\p^2 k_0}{\p u^\alpha\p u^\gamma} A^\gamma_\beta \, .
\end{equation}
In the coordinate chart of Riemann invariants, equations \eqref{eqv-e-t-1} and \eqref{eqv-e-t-2} become
\begin{align}
\label{eqv-e-t-Rie-1}  & p_i \= \lambda_i\, k_{0,i}+\psi_{,i} \, ,\\
\label{eqv-e-t-Rie-2}  & \frac{\theta_{ij}}{\lambda_i-\lambda_j} \= k_{0,ij} \+ a_{ij}\,k_{0,i} \+ a_{ji}\,k_{0,j} \,,\qquad i\neq j \, .
\end{align}

\begin{prp}\label{Propo}
Assume~\eqref{dispersionless} is integrable. At the first order approximation, the Hamiltonian perturbation \eqref{pert-1}
is (quasi-)trivial iff it is integrable.
\end{prp}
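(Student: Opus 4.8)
The plan is to encode both \emph{triviality} and \emph{first-order integrability} as properties of one and the same linear over-determined system for the potential $k_0$, and then to show that its compatibility condition is exactly \eqref{first-order-cond}. By Theorem~\ref{thm10} the perturbation \eqref{pert-1} is integrable at first order iff \eqref{first-order-cond} (equivalently \eqref{repeat1}) holds; by \eqref{eqv-e-t-Rie-2} it is (quasi-)trivial iff there is a function $k_0$ with $\omega_{ij} = k_{0,ij}+a_{ij}\,k_{0,i}+a_{ji}\,k_{0,j}$ for $i\neq j$. Writing $\phi_i:=k_{0,i}$, the latter becomes the system
\[
\phi_{i,j} \= \omega_{ij} \,-\, a_{ij}\,\phi_i \,-\, a_{ji}\,\phi_j\,,\qquad i\neq j\,,
\]
in which only the off-diagonal first derivatives are prescribed (the $\phi_{i,i}$ stay free) and the right-hand side is symmetric in $i,j$ because $\omega_{ij}=\omega_{ji}$; hence any solution $(\phi_i)$ is closed and is locally $dk_0$ for some $k_0$.

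The core computation I would carry out is the cross-derivative obstruction $\phi_{i,jk}-\phi_{i,kj}$ for pairwise distinct $i,j,k$: differentiate the displayed system in the third index and substitute the system back into itself. The terms free of $\phi$ assemble into
\[
\bigl(\omega_{ij,k}-\omega_{ik,j}\bigr)\,-\,\bigl(a_{ij}\,\omega_{ik}+a_{ji}\,\omega_{jk}-a_{ik}\,\omega_{ij}-a_{ki}\,\omega_{kj}\bigr)\,,
\]
which is precisely the left-minus-right side of \eqref{first-order-cond}. The remaining terms are linear in $\phi$, with coefficient of $\phi_i$ equal to $-(a_{ij,k}-a_{ik,j})$ and coefficients of $\phi_j,\phi_k$ of the shape appearing in \eqref{coTs}.

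The main obstacle — and the only place the integrability of \eqref{dispersionless} is used — is to check that these $\phi$-linear terms cancel identically: the $\phi_i$-coefficient vanishes by Tsarev's condition \eqref{Tsarev}, and the $\phi_j$- and $\phi_k$-coefficients by \eqref{coTs} (which follows from \eqref{Tsarev} and \eqref{ab}, as already observed). Granting this bookkeeping, the obstruction equals exactly the $\phi$-free bracket above, so the over-determined system is compatible iff \eqref{first-order-cond} holds. The two directions then drop out: if \eqref{pert-1} is trivial, $\phi_i=k_{0,i}$ is a genuine gradient and the cross-derivatives commute automatically, forcing the bracket to vanish, i.e.\ \eqref{first-order-cond}, so the perturbation is integrable; conversely, if \eqref{pert-1} is integrable then \eqref{first-order-cond} holds, the compatibility is met, and a solution $(\phi_i)$ exists, whence $k_0$ exists and the perturbation is (quasi-)trivial.

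A second, more technical point is the existence of a solution to a system in which the diagonal second derivatives are left free: this is not a plain Frobenius situation. I would settle it by the standard solvability theory for such linear diagonal-free (semi-Hamiltonian) systems, whose only compatibility requirement is the commutation of the prescribed cross-derivatives established above; under \eqref{Tsarev} and \eqref{first-order-cond} this requirement is satisfied, so a solution $(\phi_i)$ — and hence, by closedness, a function $k_0$ — is guaranteed to exist (cf.\ Tsarev~\cite{Tsarev}). This completes the equivalence, and I expect the delicate part of the write-up to be precisely the index bookkeeping in the cancellation of the $\phi$-linear terms.
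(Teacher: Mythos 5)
Your proposal is correct and follows essentially the same route as the paper: the paper likewise encodes triviality as the over-determined system \eqref{eqv-e-t-Rie-2} for $k_0$, computes the cross-derivative compatibility condition (its equation \eqref{320}), and cancels the $k_0$-dependent terms via \eqref{Tsarev} and \eqref{coTs} to land exactly on \eqref{first-order-cond}. Your only addition is making explicit the symmetry/closedness remark and the Tsarev-type solvability of the diagonal-free system in the direction integrable $\Rightarrow$ trivial, which the paper leaves implicit.
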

\begin{proof}
The compatibility condition of equation \eqref{eqv-e-t-Rie-2} is given by
$$
\p_k k_{0,ij} \= \p_j k_{0,ik} \,, \qquad\forall~\varepsilon(i,j,k)=\pm1,
$$
which yields
\begin{equation}\label{comp2}
\p_k\biggl(\frac{\theta_{ij}}{\lambda_i-\lambda_j}-a_{ij}\,k_{0,i}-a_{ji}\,k_{0,j}\biggr) \=
\p_j\biggl(\frac{\theta_{ik}}{\lambda_i-\lambda_k}-a_{ik}\,k_{0,i}-a_{ki}\,k_{0,k}\biggr) \, .
\end{equation}
Substituting equation~\eqref{eqv-e-t-Rie-2} into~\eqref{comp2} we find
\begin{align}
&\omega_{ij,k} \,-\, a_{ij}\,\omega_{ik} \,-\, a_{ji}\,\omega_{jk} \,-\, k_{0,i}\,a_{ij,k} \+ k_{0,j}(a_{ji}a_{jk}-a_{ji,k}) \+ k_{0,k}(a_{ki}a_{ij}+a_{kj}a_{ji})\nn\\
& \= \omega_{ik,j} \,-\, a_{ik}\,\omega_{ij} \,-\, a_{ki}\,\omega_{kj} \,-\, k_{0,i}\,a_{ik,j} \+ k_{0,k}(a_{ki}a_{kj}-a_{ki,j}) \+ k_{0,j}(a_{ji}a_{ik}+a_{jk}a_{ki}) \, .
\label{320}
\end{align}
Finally substituting equations~\eqref{Tsarev} and~\eqref{coTs} into~\eqref{320}, we have
\beq
\omega_{ij,k} \,- \, a_{ij}\,\omega_{ik} \,-\, a_{ji}\,\omega_{jk} \= \omega_{ik,j} \,-\, a_{ik}\,\omega_{ij} \,-\, a_{ki}\,\omega_{kj} \, ,\quad\forall~\varepsilon(i,j,k)=\pm1,
\eeq
which finishes the proof.
\end{proof}

We now proceed to the second order approximation. Let 
\begin{equation}\label{pert-2}
H \= H_0 \+ \e \, H_1 \+ \e^2 \, H_2 \+ \mathcal{O}(\e^3)
\end{equation}
be a Hamiltonian perturbation of~\eqref{dispersionless} with 
\beq
H_2 \= \int_{S^1} \tilde d_{\alpha\beta}(v)\,v^\alpha_x \,v^\beta_x\,dx \= \int_{S^1} \sum_{i,j=1}^n d_{ij} {R_i}_x {R_j}_x \, dx
\eeq
and 
\beq
\tilde d_{\alpha\beta} \= \tilde d_{\beta\alpha} \, ,\qquad d_{ij} \= d_{ji} \:= \tilde d_{\alpha\beta} \,v^{\alpha}_{,\,i} \,  v^\beta_{,\,j} \, .
\eeq
Assume as always that \eqref{dispersionless} is integrable, and assume that \eqref{pert-2}
is $\mathcal{O}(\e^1)$-integrable. According to Proposition~\ref{Propo}, 
there exists a canonical Miura-type transformation
reducing $H_1$ to the zero functional. 
So the assumption that $H_1 = 0$ used in~\eqref{sec} in the statement of Theorem~\ref{thm10} does not lose generality.

Let us study the necessary condition of $\mathcal{O}(\e^2)$-integrability for~\eqref{pert-2}. 

\noindent{\it Continuation of Proof} of Theorem~\ref{thm10}.  \quad 
Recall that the $\mathcal{O}(\e^2)$-integrability means that, for a generic conservation law~$F_0$ of~\eqref{dispersionless}, 
there exists a local functional of the form
\beq
F_2 \= \int_{S^1} \tilde D_{\alpha\beta}(u)\,u^\alpha_x\,u^\beta_x\,dx \= \sum_{i,j=1}^n \int_{S^1} D_{ij}(R) {R_i}_x {R_j}_x \,dx
\eeq
such that 
\beq  \label{int-hf-2} \{H_0,F_2\} \+ \{H_2,F_0\} \= 0 \, . \eeq

Note that equation~\eqref{int-hf-2} implies
\begin{align}
& M^\rho_\sigma\, \tilde d_{\rho\beta}\,-\,M^\rho_\beta\, \tilde d_{\rho\sigma} 
\= A^\rho_\sigma\, \tilde D_{\rho\beta}\,-\,A^\rho_\beta\, \tilde D_{\rho\sigma}\,,\label{int-(1)}\\
& M^\rho_\gamma\, \tilde d_{\rho\sigma,\beta} \+ M^\rho_\sigma\, \tilde d_{\rho\beta,\gamma} \+ M^\rho_\beta\, \tilde d_{\rho\gamma,\sigma} 
\,-\, M^\rho_{\sigma\gamma}\, \tilde d_{\rho\beta}\,-\,M^\rho_{\sigma\beta}\, \tilde d_{\rho_\gamma}\,-\,M^\rho_{\beta\gamma}\, \tilde d_{\rho\sigma}\nn\\
&\qquad\qquad\qquad\qquad\qquad -\,M^{\rho}_\sigma\, \tilde d_{\beta\gamma,\rho} 
\,-\,M^{\rho}_{\beta}\, \tilde d_{\sigma\gamma,\rho}\,-\,M^{\rho}_\gamma \, \tilde d_{\sigma\beta,\rho}\nn\\
& \= (M\leftrightarrow A, d\leftrightarrow D) \, . \label{int-(2)}
\end{align}
In the coordinate system of the complete Riemann invariants, \eqref{int-(1)} and~\eqref{int-(2)} become
\begin{align}
& \frac{D_{ij}}{\mu_i-\mu_j} \= \frac{d_{ij}} {\lambda_i-\lambda_j} \, ,\quad \forall~i\neq j, \label{ij}\\
& \lambda_{i,l}D_{ij} \+ \lambda_{j,i}D_{jl} \+ \lambda_{i,j} D_{il} \+ (\lambda_i-\lambda_l)D_{lj,i} \+ (\lambda_j-\lambda_l)D_{li,j} \+ (\lambda_l-\lambda_j)D_{ij,l}\nn\\
& \= \mu_{i,l}\,d_{ij} \+ \mu_{j,i}\,d_{jl} \+ \mu_{i,j}\, d_{il} \+ (\mu_i-\mu_l)\,d_{lj,i} \+ (\mu_j-\mu_l)\,d_{li,j} \+ (\mu_l-\mu_j)\,d_{ij,l}\,,\quad \forall~i,j,l \,. \label{ijk} 
\end{align}
Note that in the derivation of~\eqref{ijk} we have used~\eqref{ij}.

Taking $j=l=i$ in~\eqref{ijk} we obtain
\beq\label{(1)}
\lambda_{i,i}\, D_{ii} \= \mu_{i,i}\, d_{ii} \,.
\eeq
By assumption, in the subset~$U$ of~$M$, $\lambda_i$ satisfy
$\lambda_{i,i}\neq 0$.
Thus there exist functions~$C_i(R)$ such that
\beq\label{dii}
D_{ii} \= - \, C_i (R)\, \mu_{i,i} \,, \qquad d_{ii} \= - \, C_i(R) \, \lambda_{i,i} \, .
\eeq
Taking $l=j$ and $i\neq j$ in~\eqref{ijk} we find
\beq\label{diij}
\lambda_{j,i}D_{jj} \+ (\lambda_i-\lambda_j) D_{jj,i} \= \mu_{j,i}\,d_{jj} \+ (\mu_j-\mu_i)\,d_{jj,i} \,,\qquad \forall~j\neq i.
\eeq
Substituting~\eqref{dii} into~\eqref{diij} and using~\eqref{Tsarev} we obtain
\beq
C_{j,i} \, \bigl((\lambda_i-\lambda_j)\mu_{j,j}-(\mu_i-\mu_j)\lambda_{j,j} \bigr) \= 0 \,,\qquad \forall\, j\neq i \, ,
\eeq
which implies
$$
C_{j,i} \= 0 \, ,\qquad \forall\,j\neq i \, ,
$$
i.e.
$$
C_j(R) \= C_j(R_j) \, .
$$
Taking $l=i$ and $j\neq i$ in~\eqref{ijk} and using~\eqref{(1)},\eqref{diij} we find
\beq\label{(3)}
\lambda_{i,i} D_{ij} \+ (\lambda_i-\lambda_j) D_{ij,i} \= \mu_{i,i} \, d_{ij} \+ (\mu_i-\mu_j) \, d_{ij,i} \, .
\eeq
Taking $j=i$ and $l\neq i$ in~\eqref{ijk} and using~\eqref{diij}, we find
\beq
\lambda_{i,i} D_{li} \+ (\lambda_{i}-\lambda_l)D_{li,i} \= \mu_{i,i}\,d_{li} \+ (\mu_i-\mu_l)\,d_{li,i}\,,
\eeq
which coincides with~\eqref{(3)}. It is straightforward to check that~\eqref{ij} and~\eqref{Tsarev} imply~\eqref{(3)}. So~\eqref{(3)} does not give new constraints to~$d_{ij}$, $i\neq j$.

Now we use~\eqref{ijk} in the case $\varepsilon(i,j,l)=\pm 1$. First, by~\eqref{ij} it is convenient to write
\beq\label{anti-symm}
D_{ij} \= s_{ij}(\mu_i-\mu_j) \, , \quad d_{ij} \= s_{ij} (\lambda_i-\lambda_j) \,,\qquad i\neq j \,,
\eeq
where $s_{ij}$ are some anti-symmetric fields.
Substituting~\eqref{anti-symm} in~\eqref{ijk} and using \eqref{Tsarev} we obtain
\beq
(s_{lj,i}+s_{ji,l}+s_{il,j})\left((\lambda_i-\lambda_l)(\mu_j-\mu_l)-(\lambda_j-\lambda_l)(\mu_i-\mu_l)\right) \= 0 \,, \quad \forall~\varepsilon(i,j,l)=\pm 1 \, .
\eeq
Hence
\beq
s_{lj,i} \+ s_{ji,l} \+ s_{il,j} \= 0 \,,  \qquad \forall~\varepsilon(i,j,l)=\pm 1 \,.
\eeq
The theorem is proved. $\hfill\square$

We now consider the condition of quasi-triviality for the Hamiltonian perturbation~\eqref{pert-2} with $H_1=0$.
Such a perturbation is called {\it quasi-trivial} if there exists a local functional~$K$ of the form
$$
K \= \e\, K_1 \+ \mathcal{O}(\e^2), \quad  K_1=\int_{S^1} k_1(u;u_x)\,dx,
$$
such that
\begin{equation}\label{deform-2}
H_0 \+ \e\,\{H_0,K\} \= H \, .
\end{equation}
Here $k_1$ is also required to satisfy the following homogeneity condition:
\begin{equation}\label{basic22}
\sum_{r\geq 1} r \,u^{\alpha}_r\frac{\p }{\p u^{\alpha}_r}\biggl(\frac{\p k_1}{\p u^\beta}-\p_x\biggl(\frac{\p k_1}{\p u^\beta_x}\biggr)\biggr)=\,\frac{\p k_1}{\p u^\beta}-\p_x\biggl(\frac{\p k_1}{\p u^\beta_x}\biggr).
\end{equation}

Equation~\eqref{basic22} is equivalent to the following linear PDE system:
\begin{align}
& u^\alpha_x\,k_{1,u^\alpha_x u^\beta_x u^\gamma_x} \+ k_{1,u^\beta_x u^\gamma_x} \= 0\,, \label{homo-1}\\
& u^\alpha_x\,k_{1,u^\alpha_x u^\beta} \,-\, u^\alpha_x u^\gamma_x\,k_{1,u^\gamma u^\alpha_x u^\beta_x}\,-\,k_{1,u^\beta} \= 0\,.\label{homo-2}
\end{align}
From equation~\eqref{deform-2} we obtain
$$
\{H_0,K_1\} \= H_2,
$$
which is equivalent to
\begin{equation}\label{basic}
\frac{\delta }{\delta u^\rho(x)}\biggl(H_2+\int_{S^1} \frac{\delta K_1}{\delta u^\alpha(x)} A^\alpha_\gamma u^\gamma_x\,dx\biggr) \= 0 \, .
\end{equation}
Equations \eqref{basic} read explicitly as follows
\begin{equation}\label{basic2}
\sum_{j=0}^2 (-1)^j \p_x^j \frac{\p}{\p u^{\rho}_j}\biggl[ \tilde d_{\alpha\beta}\,u^\alpha_x\, u^\beta_x+A^\alpha_\gamma\, u^\gamma_x\,\biggl(\frac{\p k_1}{\p u^\alpha}-\p_x\Bigl(\frac{\p k_1}{\p u^\alpha_x}\Bigr)\biggr)\biggr] \= 0 \, .
\end{equation}

We will now prove Theorem~\ref{thm1} by solving equations \eqref{homo-1},\,\eqref{homo-2},\,\eqref{basic2}.

\noindent{\it Proof} of Theorem~\ref{thm1}. \quad Part I. We first prove that quasi-triviality implies integrability. 

Comparing the coefficients of $u^{\sigma}_{xxx}$ of both sides of equations \eqref{basic2} gives
\begin{equation}\label{u3}
A^\alpha_\rho \, k_{1,u^\alpha_x u^\sigma_x} \= A^\alpha_\sigma \,k_{1,u^\alpha_x u^\rho_x }.
\end{equation}
In terms of the Riemann invariants, equations \eqref{u3} read
$$\sum_{i\neq j} k_{1, {R_i}_x {R_j}_x} \, R_{i,\sigma} \, R_{j,\rho}\,(\lambda_j-\lambda_i) \= 0 \, ,$$
which implies
\begin{equation}\label{cross}
k_{1, {R_i}_x {R_j}_x} \= 0 \,,\qquad \forall~i\neq j \, .
\end{equation}

\begin{lem} Up to a total $x$-derivative, $k_1$ must have the form
\begin{equation} \label{k1}
k_1 \= \sum_{i=1}^n C_i(R_1,...,R_n){R_i}_x\log {R_i}_x \,-\, C_i(R_1,...,R_n){R_i}_x \+ \phi_i(R_1,...,R_n){R_i}_x
\end{equation}
for some $C_i,\phi_i$. Moreover, if $k_1$ has the form~\eqref{k1} then it satisfies \eqref{homo-1}, \eqref{homo-2}, \eqref{u3}.
\end{lem}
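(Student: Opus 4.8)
The plan is to start from the constraint \eqref{cross}, which tells us that the mixed second derivatives $k_{1,{R_i}_x {R_j}_x}$ vanish for $i\neq j$. This immediately decouples the dependence of $k_1$ on the first-derivative variables ${R_1}_x,\dots,{R_n}_x$: the Hessian of $k_1$ with respect to the ${R_i}_x$ is diagonal, so up to terms linear in the ${R_i}_x$ (and a function of $R$ alone, which can be absorbed or discarded) we must have
\[
k_1 \= \sum_{i=1}^n G_i(R,{R_i}_x) \+ \sum_{i=1}^n \phi_i(R)\,{R_i}_x \+ G_0(R)
\]
for some functions $G_i$ that depend on the derivative variables only through the single argument ${R_i}_x$. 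The homogeneity equations \eqref{homo-1} and \eqref{homo-2} will then be imposed to pin down the form of each $G_i$.

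Next I would feed this diagonal ansatz into the homogeneity conditions. Equation~\eqref{homo-1}, restricted to a single variable $t:={R_i}_x$, reads $t\,G_i''' + G_i'' = 0$ (primes denoting $\p_t$), i.e. $\p_t(t\,G_i'')=0$, whence $G_i'' = C_i(R)/t$ for a function $C_i(R)$ independent of $t$. Integrating twice gives $G_i = C_i(R)\bigl(t\log t - t\bigr)$ up to an affine function of $t$ with $R$-dependent coefficients; the affine part is exactly the $\phi_i(R)\,{R_i}_x$ term already present plus a function of $R$, so it can be merged. This reproduces the claimed shape
\[
k_1 \= \sum_{i=1}^n C_i(R)\,{R_i}_x\log {R_i}_x \,-\, C_i(R)\,{R_i}_x \+ \phi_i(R)\,{R_i}_x
\]
modulo a total $x$-derivative, since a pure function of $R$ contributes a total derivative to the density. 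I would then check that equation~\eqref{homo-2} is automatically satisfied by this form (it is first order in the derivative variables and should hold identically given the one-variable dependence), and that \eqref{u3}, equivalently \eqref{cross}, holds trivially because $k_{1,{R_i}_x{R_j}_x}=0$ for $i\neq j$ by construction.

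For the converse direction of the lemma I would simply verify by direct substitution that any $k_1$ of the form~\eqref{k1} satisfies \eqref{homo-1}, \eqref{homo-2}, and \eqref{u3}. The only derivative variables appearing are the ${R_i}_x$, each entering its own summand, so the mixed Hessian vanishes and \eqref{cross}/\eqref{u3} is immediate; the computations $\p_{{R_i}_x}(\,{R_i}_x\log{R_i}_x - {R_i}_x\,) = \log{R_i}_x$ and the corresponding second derivative $1/{R_i}_x$ make the homogeneity checks short.

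The main obstacle I anticipate is the bookkeeping when passing between the $v$-coordinates and the Riemann-invariant coordinates inside a logarithmic, rational-in-derivatives ansatz: the variables ${R_i}_x$ are not independent coordinate functionals but equal $R_{i,\alpha}u^\alpha_x$, so ``$k_{1,{R_i}_x{R_j}_x}$'' must be interpreted carefully, and one must justify that treating $k_1$ as a function of $R$ and of the $n$ quantities ${R_i}_x$ is legitimate — in particular that the step \eqref{u3}$\Rightarrow$\eqref{cross} really does decouple the variables and that the integration of the ODE $t\,G_i''+G_i''=0$ commutes with the $R$-dependence. Equally delicate is controlling the ``up to a total $x$-derivative'' clause: one must argue that the freedom to add total derivatives is exactly what absorbs the leftover function of $R$ and any ambiguity in the affine-in-${R_i}_x$ part, so that the normal form~\eqref{k1} is genuinely forced rather than merely one representative.
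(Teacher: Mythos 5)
Your strategy coincides with the paper's: deduce from \eqref{cross} that $k_1$ separates into a sum of functions of $(R,{R_i}_x)$, turn \eqref{homo-1} into the one-variable ODE $t\,G_i'''+G_i''=0$, integrate, and verify the converse by direct substitution. (Your ODE is in fact the correct one: the paper's displayed intermediate equation carries a factor $2$ that is a typo, since its own solution \eqref{sol_ode} solves the coefficient-$1$ equation, i.e.\ yours.)

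There is, however, a genuine gap in how you dispose of the leftover function of $R$. After integrating you hold $k_1=\sum_i C_i(R)\bigl({R_i}_x\log{R_i}_x-{R_i}_x\bigr)+\sum_i\phi_i(R){R_i}_x+E(R)$, where $E(R)$ collects your $G_0$ and the second integration ``constants''. You discard $E(R)$ on the grounds that ``a pure function of $R$ contributes a total derivative to the density''. That is false: a total $x$-derivative of a local function $g(u)$ equals $g_{,\alpha}u^\alpha_x$, which is homogeneous of degree $1$ (linear in $u_x$), whereas a nonconstant $E(R)$ has degree $0$ and is never a total $x$-derivative; by the same degree count it cannot be absorbed into the $\phi_i(R){R_i}_x$ terms either. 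You compound this by asserting that \eqref{homo-2} ``is automatically satisfied'' by your ansatz. It is not: for the piece $k_1=E(R)$, which has no $u_x$-dependence, the first two terms of \eqref{homo-2} vanish and the equation reduces to $k_{1,u^\beta}=0$, i.e.\ it forces $E$ to be constant. So \eqref{homo-2} is precisely the constraint that eliminates the term you threw away --- and this is exactly how the paper finishes: substituting the separated solution into \eqref{homo-2} yields $\p_\beta\bigl(\sum_i E_i(R)\bigr)=0$. As written, your argument proves only that $k_1$ has the form \eqref{k1} plus an arbitrary function of $R$, which is strictly weaker than the lemma. The repair is short --- impose \eqref{homo-2} as a constraint instead of declaring it automatic --- but the step as you wrote it fails.
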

\begin{proof}
Equations \eqref{cross} imply that~$k_1$ must have the variable separation form
\begin{equation} \label{sep}
k_1 \= \sum_{i=1}^n B_i (R_1,...,R_n;{R_i}_x) \, .
\end{equation}
Noting that
\begin{align}
& k_{1,u^\alpha_x} \= \sum_{i=1}^n k_{1,{R_i}_x}R_{i,\alpha} \, , \nn \\
& k_{1,u^\alpha_x u^\beta_x} \= \sum_{i,j=1}^n k_{1,{R_i}_x{R_j}_x}R_{i,\alpha} R_{j,\beta} \, , \nn \\
& k_{1,u^\alpha_x u^\beta_x u^\gamma_x} \= \sum_{i,j,k=1}^n k_{1,{R_i}_x{R_j}_x{R_k}_x} R_{i,\alpha} R_{j,\beta}R_{k,\gamma} \nn
\end{align}
and substituting equation \eqref{sep} into equations \eqref{homo-1} we obtain
$$
{R_i}_x\,B_{i,{R_i}_x{R_i}_x{R_i}_x} \+ 2 B_{i,{R_i}_x{R_i}_x} \= 0 \, .
$$
If follows that
\begin{equation}\label{sol_ode}
B_i \= E_i(R) \+ \phi_i(R) {R_i}_x \+ C_i(R) {R_i}_x \log {R_i}_x  \,-\, C_i(R) {R_i}_x
\end{equation}
for some functions $C_i,\phi_i,E_i$. Finally, noticing that
\begin{align}
& k_{1,u^\beta} \= \sum_{i=1}^n \bigl(k_{1,{R_i}} R_{i,\beta}+k_{1,{R_i}_x} R_{i,\beta\sigma} u^\sigma_x\bigr) \, , \nn \\
& k_{1,u^\alpha_x u^\beta} \= \sum_{i,j=1}^n \left(k_{1,{R_i}_x R_j}R_{j,\beta}+k_{1,{R_i}_x {R_j}_x} R_{j,\beta\sigma}u^\sigma_x\right)R_{i,\alpha}+\sum_{i=1}^n k_{1,{R_i}_x}R_{i,\alpha\beta} \, , \nn \\
& k_{1,u^\alpha_x u^\beta_x u^\gamma} \= \sum_{i,j,k=1}^n \bigl(k_{1,{R_i}_x{R_j}_x R_k}R_{k,\gamma}+k_{1,{R_i}_x{R_j}_x {R_k}_x}R_{k,\gamma\sigma}u^\sigma_x\bigr)R_{i,\alpha} R_{j,\beta} \nn \\
& \qquad \qquad \qquad  \+ \sum_{i,j=1}^n k_{1,{R_i}_x{R_j}_x} \bigl(R_{i,\alpha\gamma} R_{j,\beta}+R_{i,\alpha} R_{j,\beta\gamma}\bigr) \, , \nn
\end{align}
and substituting \eqref{sep},\,\eqref{sol_ode} into \eqref{homo-2} we obtain
\beq
\p_\beta\biggl(\sum_{i=1}^n E_i(R)\biggr) \= 0 \, ,
\eeq
which finishes the proof.
\end{proof}

Now collect the terms of~\eqref{basic2} containing $u^\beta_{xx}u^\sigma_{xx}$:
\beq \label{qua}
u^\beta_{xx} \,  u^\sigma_{xx} \, \biggl(A^\alpha_\rho\frac{\p^3 k_1}{\p u^\alpha_x \p u^\beta_x\p u^\sigma_x}
+A^\alpha_\beta\frac{\p^3 k_1}{\p u^\alpha_x \p u^\rho_x\p u^\sigma_x}-2 A^\alpha_\sigma \frac{\p^3 k_1}{\p u^\alpha_x \p u^\beta_x \p u^\rho_x}\biggr) \= 0 \, .
\eeq
\begin{lem}
If $k_1$ satisfies~\eqref{cross} then it automatically satisfies~\eqref{qua}.
\end{lem}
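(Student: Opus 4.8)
The plan is to rewrite the bracket in \eqref{qua} in the coordinate chart of the Riemann invariants and then exploit the eigenvector property \eqref{eigen-eigen}. The starting observation is that condition \eqref{cross}, equivalently the separated form \eqref{sep}, $k_1 = \sum_i B_i(R;{R_i}_x)$, forces every mixed third derivative of $k_1$ in the jet variables ${R_i}_x$ to vanish unless all three differentiation indices coincide: differentiating the identity $k_{1,{R_i}_x{R_j}_x}=0$ ($i\neq j$) once more in any ${R_k}_x$ kills everything except the fully diagonal terms $k_{1,{R_i}_x{R_i}_x{R_i}_x}$. Combined with the chain-rule expansion of $k_{1,u^\alpha_x u^\beta_x u^\gamma_x}$ recorded in the proof of the previous lemma, this gives
\[
k_{1,u^\alpha_x u^\beta_x u^\gamma_x} \= \sum_{i=1}^n k_{1,{R_i}_x{R_i}_x{R_i}_x}\,R_{i,\alpha}\,R_{i,\beta}\,R_{i,\gamma}\,,
\]
so that each such third derivative is a sum over $i$ of completely symmetric rank-three tensors built from the single gradient $R_{i,\cdot}$.

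First I would substitute this expression into the three summands of the bracket in \eqref{qua} and contract the repeated index $\alpha$ against $A$ using \eqref{eigen-eigen}, $A^\alpha_\beta R_{i,\alpha} = \lambda_i R_{i,\beta}$. In the first summand $A^\alpha_\rho R_{i,\alpha} = \lambda_i R_{i,\rho}$, in the second $A^\alpha_\beta R_{i,\alpha} = \lambda_i R_{i,\beta}$, and in the third $A^\alpha_\sigma R_{i,\alpha} = \lambda_i R_{i,\sigma}$. After these contractions all three summands become scalar multiples of one and the same completely symmetric object $R_{i,\rho}R_{i,\beta}R_{i,\sigma}$, with coefficients $k_{1,{R_i}_x{R_i}_x{R_i}_x}\,\lambda_i$ multiplied by $1$, $1$, and $-2$ respectively. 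Summing over the three terms gives $(1+1-2)=0$ for each $i$, so the whole bracket vanishes identically, even before it is contracted with the symmetric factor $u^\beta_{xx}u^\sigma_{xx}$. Hence \eqref{qua} holds.

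There is no genuine obstacle here; the only point requiring care is the bookkeeping that upgrades \eqref{cross} to the purely diagonal structure of the third derivatives, together with the observation that complete symmetry of $R_{i,\rho}R_{i,\beta}R_{i,\sigma}$ in its three lower indices is exactly what makes the $1+1-2$ cancellation work. It is worth remarking that the cancellation is pointwise in $i$ and does not actually use the contraction with $u^\beta_{xx}u^\sigma_{xx}$, so the coefficient itself, and not merely its contraction, is zero.
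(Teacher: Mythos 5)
Your proof is correct and follows essentially the same route as the paper's: expand the third derivatives $k_{1,u^\alpha_x u^\beta_x u^\gamma_x}$ in the Riemann-invariant chart, use \eqref{cross} to reduce to the fully diagonal terms $k_{1,{R_i}_x{R_i}_x{R_i}_x}$, contract with $A$ via the eigenvector property \eqref{eigen-eigen}, and observe the $(1+1-2)=0$ cancellation on the completely symmetric tensor $R_{i,\rho}R_{i,\beta}R_{i,\sigma}$. Your closing remark that the bracket itself vanishes pointwise, independently of the contraction with $u^\beta_{xx}u^\sigma_{xx}$, is a correct (and slightly sharpened) restatement of what the paper's computation shows.
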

\begin{proof} We have
\begin{align}
& \mbox{LHS of}~\eqref{qua}\nn\\
& \= u^\beta_{xx} \, u^\sigma_{xx}\sum_{i,j,l=1}^n k_{1,{R_i}_x{R_j}_x{R_l}_x}\, R_{l,\alpha}\left(A^\alpha_\rho R_{i,\beta}R_{j,\sigma}+A^\alpha_\beta R_{i,\rho} R_{j,\sigma}-2 A^\alpha_\sigma R_{i,\beta}R_{j,\rho}\right)\nn\\
& \= u^\beta_{xx} \, u^\sigma_{xx}\sum_{i=1}^n k_{1,{R_i}_x{R_i}_x{R_i}_x}\,R_{i,\alpha}\left(A^\alpha_\rho R_{i,\beta}R_{i,\sigma}+A^\alpha_\beta R_{i,\rho} R_{i,\sigma}-2 A^\alpha_\sigma R_{i,\beta}R_{i,\rho}\right)\nn\\
& \= u^\beta_{xx} \, u^\sigma_{xx}\sum_{i=1}^n k_{1,{R_i}_x{R_i}_x{R_i}_x}\,\lambda_i\left(R_{i,\rho} R_{i,\beta}R_{i,\sigma}+R_{i,\beta} R_{i,\rho} R_{i,\sigma}-2 R_{i,\sigma} R_{i,\beta}R_{i,\rho}\right)\nn\\
& \= 0 \,.\nn
\end{align}
The lemma is proved.
\end{proof}

Comparing the coefficients of~$u^\beta_{xx}$ of the both sides of \eqref{basic2} yields
\eqa
&\quad 2\,A^\alpha_\rho\, k_{1,u^\alpha_x u^\beta_x u^\gamma} u^\gamma_x \, - \, A^\alpha_\beta\, k_{1,u^\alpha_x u^\rho_x u^\gamma} u^\gamma_x \,-\, 3\,A^{\alpha}_{\beta\gamma} k_{1, u^\alpha_x u^\rho_x}u^\gamma_x \, - \, A^\alpha_{\gamma\epsilon}\,k_{1,u^\alpha_xu^\rho_x u^\beta_x}u^\epsilon_xu^\gamma_x\nn\\
& \+ A^\alpha_\beta\,\bigl(k_{1, u^\alpha_x u^\rho}-k_{1, u^\alpha u^\rho_x}\bigr) \+ A^\alpha_\rho \, \bigl(k_{1, u^\alpha_x u^\beta } -k_{1, u^\alpha u^\beta_x} \bigr) \,- \, 2\, \tilde d_{\rho\beta} \= 0 \, . \label{vxx}
\eeqa
Substituting \eqref{k1} into \eqref{vxx} we obtain the following lemma.
\begin{lem}
The functions $C_i$ must satisfy
\eqa \label{127}
C_{i,j} \= 0 \, ,\quad \forall~i\neq j.
\eeqa
\end{lem}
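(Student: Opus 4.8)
The plan is to substitute the explicit form~\eqref{k1} of $k_1$ directly into the relation~\eqref{vxx} and to isolate the logarithmic part. The crucial structural observation is that once two or more derivatives with respect to the jet variables $u^\gamma_x$ act on a term $C_i(R)\,{R_i}_x\log{R_i}_x$, the logarithm is killed and only rational expressions in ${R_i}_x$ survive. Indeed, $k_{1,u^\alpha_x}=\sum_i\bigl[C_i(\log{R_i}_x+1)+(\phi_i-C_i)\bigr]R_{i,\alpha}$ still carries a logarithm, whereas a second $u_x$-differentiation already yields the purely rational $k_{1,u^\alpha_x u^\beta_x}=\sum_i C_i\,R_{i,\alpha}R_{i,\beta}/{R_i}_x$, which is just~\eqref{cross} rewritten. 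Consequently every term of~\eqref{vxx} built from $k_{1,u^\alpha_x u^\beta_x u^\gamma}$, $k_{1,u^\alpha_x u^\rho_x}$ or $k_{1,u^\alpha_x u^\rho_x u^\beta_x}$ is log-free, and the entire logarithmic content of~\eqref{vxx} comes only from the two mixed blocks $A^\alpha_\rho\bigl(k_{1,u^\alpha_x u^\beta}-k_{1,u^\alpha u^\beta_x}\bigr)$ and $A^\alpha_\beta\bigl(k_{1,u^\alpha_x u^\rho}-k_{1,u^\alpha u^\rho_x}\bigr)$, each containing exactly one jet-differentiation of the logarithmic structure.

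The second step is to extract the coefficient of $\log{R_i}_x$ from those two blocks. A direct differentiation of~\eqref{k1} shows that the logarithmic part of $k_{1,u^\alpha_x u^\beta}$ is $\sum_i\bigl(\sum_k C_{i,k}R_{k,\beta}R_{i,\alpha}+C_iR_{i,\alpha\beta}\bigr)\log{R_i}_x$, while the logarithmic part of $k_{1,u^\alpha u^\beta_x}$ is $\sum_i\bigl(\sum_k C_{i,k}R_{k,\alpha}R_{i,\beta}+C_iR_{i,\alpha\beta}\bigr)\log{R_i}_x$, so in the antisymmetric combination the $C_iR_{i,\alpha\beta}$ pieces cancel and only $\sum_k C_{i,k}\bigl(R_{k,\beta}R_{i,\alpha}-R_{k,\alpha}R_{i,\beta}\bigr)$ remains as the coefficient of $\log{R_i}_x$. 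Contracting the two blocks with $A^\alpha_\rho$ and $A^\alpha_\beta$ respectively and applying the eigenvector identity~\eqref{eigen-eigen} in the form $A^\alpha_\gamma R_{i,\alpha}=\lambda_i R_{i,\gamma}$, I expect the two contributions to combine into
\[
\sum_{k}C_{i,k}\,(\lambda_i-\lambda_k)\,\bigl(R_{k,\beta}R_{i,\rho}+R_{k,\rho}R_{i,\beta}\bigr)
\]
as the coefficient of $\log{R_i}_x$ in~\eqref{vxx}.

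Since $\tilde d_{\rho\beta}$ and all the remaining terms of~\eqref{vxx} are free of logarithms, and since $\log{R_1}_x,\dots,\log{R_n}_x$ are functionally independent of the rational differential expressions, each such coefficient must vanish identically in $\rho$ and $\beta$. The term $k=i$ drops out automatically because of the factor $\lambda_i-\lambda_i$, leaving $\sum_{k\neq i}C_{i,k}(\lambda_i-\lambda_k)\bigl(R_{k,\beta}R_{i,\rho}+R_{k,\rho}R_{i,\beta}\bigr)=0$. Finally I would invoke the non-degeneracy of the Jacobian of $v\mapsto R$, which makes the covectors $R_{1,\cdot},\dots,R_{n,\cdot}$ a basis and hence the symmetric tensors $R_{k,\beta}R_{i,\rho}+R_{k,\rho}R_{i,\beta}$ linearly independent for distinct unordered pairs $\{i,k\}$; together with $\lambda_i\neq\lambda_k$ for $i\neq k$ this forces $C_{i,k}=0$ for every $k\neq i$, which is precisely~\eqref{127}.

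The main obstacle I anticipate is the bookkeeping in the middle step: one must propagate the chain rule carefully through both the coefficients $C_i(R)$ and the gradients $R_{i,\alpha}(u)$, and one must be sure that the rational pieces (those carrying $1/{R_i}_x$ and $1/{R_i}_x^2$) generated along the way genuinely belong to the log-free part and cannot conspire to cancel the logarithmic terms. Once this clean separation of ``$\log$ versus rational'' is justified, the algebra collapses quickly through the eigenvector relation and the linear-independence argument, so the substance of the proof is the verification that~\eqref{vxx} decomposes as claimed.
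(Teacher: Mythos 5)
Your proof is correct and follows essentially the same route as the paper: substitute \eqref{k1} into \eqref{vxx}, observe that only the blocks $A^\alpha_\rho\bigl(k_{1,u^\alpha_x u^\beta}-k_{1,u^\alpha u^\beta_x}\bigr)$ and $A^\alpha_\beta\bigl(k_{1,u^\alpha_x u^\rho}-k_{1,u^\alpha u^\rho_x}\bigr)$ carry logarithms, extract the coefficient of $\log {R_i}_x$ via the eigenvector identity to get $\sum_{k\neq i}C_{i,k}(\lambda_i-\lambda_k)\bigl(R_{k,\beta}R_{i,\rho}+R_{k,\rho}R_{i,\beta}\bigr)=0$, and conclude $C_{i,k}=0$. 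Your treatment is in fact slightly more complete than the paper's, since you make explicit the linear-independence argument (via non-degeneracy of the Jacobian) that the paper leaves implicit in its final step.
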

\begin{proof} Noting that
\begin{align}
& k_{1,{R_i}_x}=C_i \log {R_i}_x \+ \phi_i \,,  \nn \\
& k_{1,{R_i}_xR_j}=C_{i,j} \log {R_i}_x \+ \phi_{i,j} \,, \nn \\
& k_{1,{R_i}_x {R_j}_x}=C_i\, \delta_{ij}\,{R_i}_x^{-1} \,, \nn
\end{align}
we find that the only possible terms containing $\log {R_i}_x$ in equations \eqref{vxx} are
$$
A^\alpha_\rho\,\bigl( k_{1,u^\alpha_x u^\beta}-k_{1,u^\alpha u^\beta_x}\bigr) \,,\qquad 
A^\alpha_\beta\,\bigl( k_{1, u^\alpha_x u^\rho}- k_{1, u^\alpha u^\rho_x}\bigr) \, .
$$
If follows that
$\sum_{i,j=1}^n C_{i,j}(\lambda_i-\lambda_j) \bigl(R_{i,\beta}R_{j,\rho}+R_{i,\rho}R_{j,\beta}\bigr)\,\log {R_i}_x = 0$,
which yields 
\[\sum_{j\neq i}C_{i,j}(\lambda_i-\lambda_j)\bigl(R_{i,\beta}R_{j,\rho}+R_{i,\rho}R_{j,\beta}\bigr) \= 0\,.\]
This gives~\eqref{127}. The lemma is proved.
\end{proof}

\begin{lem}\label{dab}
The $\tilde d_{\alpha\beta}$ must have the form
\begin{equation}\label{dab-expression}
\tilde d_{\alpha\beta} \= - \, \frac{1}{2} \, \sum_{i=1}^n C_i(R_i)\left(\lambda_{i,\alpha}\, R_{i,\beta}+\lambda_{i,\beta}\, R_{i,\alpha}\right) \+ \frac{1}{2} \, \sum_{i\neq j} s_{ij} \,(\lambda_i-\lambda_j) \, R_{i,\alpha} R_{j,\beta} \, ,
\end{equation}
where $s_{ij}=\phi_{i,j}-\phi_{j,i}$ for some functions $\phi_i.$
\end{lem}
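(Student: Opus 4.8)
The plan is to read \eqref{vxx} — which records the coefficient of $u^\beta_{xx}$ in \eqref{basic2} — as an explicit formula for $\tilde d_{\rho\beta}$. Moving the last term across, I would write $2\,\tilde d_{\rho\beta}$ as the sum of the remaining six $k_1$-derivative terms. Since the left-hand side is a function of~$v$ alone, the whole right-hand side must become independent of the jet variables $u^\gamma_x$ once the explicit density \eqref{k1} is substituted; isolating that jet-independent value is precisely what produces \eqref{dab-expression}.

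First I would convert every $u$-jet derivative of $k_1$ occurring in \eqref{vxx} into $R$-coordinate derivatives, using the chain-rule identities listed just before the lemma determining the form of $k_1$ (and the analogous expansions of $k_{1,u^\alpha_x u^\beta_x u^\gamma}$ and $k_{1,u^\alpha u^\beta_x}$). Substituting \eqref{k1} and the derivatives $k_{1,{R_i}_x}=C_i\log{R_i}_x+\phi_i$, $k_{1,{R_i}_x{R_j}_x}=\delta_{ij}C_i{R_i}_x^{-1}$, $k_{1,{R_i}_x{R_i}_x{R_i}_x}=-C_i{R_i}_x^{-2}$ and $k_{1,{R_i}_xR_j}=C_{i,j}\log{R_i}_x+\phi_{i,j}$ — where $C_{i,j}=0$ for $i\neq j$ by \eqref{127} — I would split the right-hand side into three groups according to their jet-dependence: terms carrying $\log{R_i}_x$, terms carrying the ratio ${R_l}_x/{R_i}_x$ (produced when the factor $C_i{R_i}_x^{-1}$ meets the explicit $u^\gamma_x$ factors), and the genuinely jet-independent remainder.

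Next I would show that the first two groups vanish. The $\log$-terms are exactly those already analysed in the previous lemma and cancel because $C_{i,j}=0$ for $i\neq j$. For the rational terms I would contract the coefficients $A^\alpha_\rho$, $A^\alpha_\beta$, $A^\alpha_{\beta\gamma}$ and $A^\alpha_{\gamma\e}$ against the gradients $R_{i,\alpha}$ using the eigenvector relation \eqref{eigen-eigen}, $A^\alpha_\beta R_{i,\alpha}=\lambda_i R_{i,\beta}$, and its $v$-derivative $A^\alpha_{\beta\gamma}R_{i,\alpha}=\lambda_{i,\gamma}R_{i,\beta}+\lambda_i R_{i,\beta\gamma}-A^\alpha_\beta R_{i,\alpha\gamma}$; after this simplification the coefficient of each independent ratio ${R_l}_x/{R_i}_x$ is seen to cancel among the three-derivative term $A^\alpha_{\gamma\e}k_{1,u^\alpha_x u^\rho_x u^\beta_x}u^\e_x u^\gamma_x$, the mixed terms carrying $C_i{R_i}_x^{-1}$, and the term $A^\alpha_{\beta\gamma}k_{1,u^\alpha_x u^\rho_x}u^\gamma_x$.

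Finally I would collect the surviving jet-independent contributions. The pieces carrying the functions $C_i$ — generated when $A^\alpha_{\beta\gamma}$ hits $R_{i,\alpha}$ and produces $\lambda_{i,\gamma}$, together with the diagonal $i=l$ part of $C_i{R_i}_x^{-1}\cdot u^\gamma_x$ — assemble, after symmetrizing in $\rho\leftrightarrow\beta$, into $-\tfrac12\sum_i C_i(\lambda_{i,\rho}R_{i,\beta}+\lambda_{i,\beta}R_{i,\rho})$, while the pieces carrying the functions $\phi_i$ combine by antisymmetry into $\tfrac12\sum_{i\neq j}(\phi_{i,j}-\phi_{j,i})(\lambda_i-\lambda_j)R_{i,\rho}R_{j,\beta}$, giving \eqref{dab-expression} with $s_{ij}=\phi_{i,j}-\phi_{j,i}$. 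I expect the main obstacle to be the bookkeeping in the cancellation of the rational-in-${R_i}_x$ terms: several individual contributions depend on the jets through $\log{R_i}_x$ and ${R_l}_x/{R_i}_x$, and only their sum is a function of~$v$, so the delicate point is to arrange the $u$-to-$R$ conversion and the use of the eigen-relations so that these cancellations — and the extraction of the exact coefficient $-\tfrac12$ together with its symmetrization — are transparent.
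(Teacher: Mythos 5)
Your proposal is correct and is essentially the paper's own argument: the paper's proof of Lemma~\ref{dab} is exactly the computation you outline --- substitute the form \eqref{k1} of $k_1$ (with $C_{i,j}=0$ for $i\neq j$ from \eqref{127}) into \eqref{vxx}, pass to the Riemann-invariant chart, let the $\log{R_i}_x$ and rational-in-${R_i}_x$ contributions cancel, and identify the jet-independent remainder with $2\,\tilde d$. The only cosmetic difference is that the paper contracts \eqref{vxx} with $u^\rho_x u^\beta_x$ and records the outcome as the quadratic form $2\,\tilde d_{\alpha\beta}\,u^\alpha_x u^\beta_x = -2\sum_i C_i(R_i)\,{\lambda_i}_x\,{R_i}_x+\sum_{i,j}s_{ij}\,(\lambda_i-\lambda_j)\,{R_i}_x{R_j}_x$, recovering \eqref{dab-expression} by polarization of the symmetric tensor, whereas you keep the uncontracted tensor equation and symmetrize in $\rho\leftrightarrow\beta$ at the end.
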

\begin{proof} 
Using equations \eqref{vxx} we obtain
\beq
2\,\tilde d_{\alpha\beta} \, u^\alpha_x \, u^\beta_x \= - \, 2 \, \sum_{i=1}^n C_i(R_i) \, {\lambda_i}_x \, {R_i}_x \+ \sum_{i,j=1}^n s_{ij}\,(\lambda_i-\lambda_j) \, {R_i}_x \, {R_j}_x \, .
\eeq
The lemma is proved.
\end{proof}

Finally, let us check that equalities~\eqref{basic2} hold true if $\tilde d_{\alpha\beta}$ and~$k_1$ are given by \eqref{dab-expression}~and~\eqref{k1}. Indeed, collecting the rest terms of both sides of \eqref{basic2} we find that it suffices to show
\begin{align}
& -  \bigl(\tilde d_{\alpha\beta,\rho} u^\beta_x u^\alpha_x -\,2\,\tilde d_{\rho\beta,\gamma}u^\gamma_x u^\beta_x\bigr)\nn\\
& ~ \= A^\alpha_\gamma u^\gamma_x \, \bigl(k_{1, u^\alpha u^\rho}-u^\sigma_x \, k_{1, u^\sigma u^\alpha u^\rho_x} \bigr)
\,-\, A^\alpha_\rho u^\gamma_x \, \bigl(k_{1,u^\gamma u^\alpha}-u^\sigma_x\, k_{1, u^\sigma u^\gamma u^\alpha_x}\bigr)\nn\\
& ~ \qquad   - \, A^{\alpha}_{\gamma\beta\epsilon} \, u^\epsilon_x u^\beta_x u^\gamma_x\, k_{1, u^\alpha_x u^\rho_x}
\+ A^\alpha_{\gamma\sigma} u^\sigma_x u^\gamma_x\bigl(k_{1, u^\alpha_x u^\rho}-k_{1, u^\alpha u^\rho_x}-u^\beta_x \,k_{1,u^\beta u^\alpha_x u^\rho_x}\bigr) \,,\label{final-check}
\end{align}
where $A^{\alpha}_{\gamma\beta\epsilon} \:= \eta^{\alpha\delta}\, \p_{\delta}\p_{\gamma}\p_{\beta}\p_{\epsilon}(h)$.
Indeed, the contribution of $\phi_i$-terms is just a result of canonical Miura-type transformation and note that equations \eqref{basic2} depend on $k_1$ \emph{linearly}, so we can assume $\phi_i=0$, $i=1,...,n$. Then by straightforward calculations we find that the both sides of the above equations \eqref{final-check} are equal to
$- \, \sum_{i=1}^n C_i(R_i) \bigl(\lambda_{i,\beta\delta}R_{i,\rho}+\lambda_{i,\rho}R_{i,\beta\delta}\bigr)\,u^\beta_x u^\delta_x$.

We have proved that the Hamiltonian perturbation~\eqref{pert-2} possesses quasi-triviality at $\mathcal{O}(\e^2)$-approximation 
iff $\tilde d_{\alpha\beta}$ has the form~\eqref{dab-expression}. 
We continue to show quasi-triviality implies integrability.

\begin{lem}\label{lemquasiimpliesint}
If a Hamiltonian perturbation of the form~\eqref{pert-2} with $H_1=0$ is 
quasi-trivial at the second order approximation, then it is $\mathcal{O}(\e^2)$-integrable.
\end{lem}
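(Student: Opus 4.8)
The plan is to recognize that this lemma is exactly the converse of the computation carried out in the continuation of the proof of Theorem~\ref{thm10}: there, \emph{assuming} $\mathcal{O}(\e^2)$-integrability—i.e. the existence of $D_{ij}$ solving~\eqref{ij} and~\eqref{ijk}—we extracted the form of $d$. Here I would run that chain of equivalences backward. By quasi-triviality and Lemma~\ref{dab}, the density of $H_2$ has the form~\eqref{dab-expression}, so in the Riemann coordinates $d_{ii}=-C_i(R_i)\lambda_{i,i}$ and $d_{ij}=s_{ij}(\lambda_i-\lambda_j)$ for $i\neq j$, with $C_i=C_i(R_i)$ and $s_{ij}=\phi_{i,j}-\phi_{j,i}$ antisymmetric. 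Fix a generic conservation law $F_0$ of~\eqref{dispersionless}, with Hessian eigenvalues $\mu_i$ obeying the commutativity relations~\eqref{con4}. Since $H_1=0$, I may take $F_1=0$, so that $\{H,F\}=\mathcal{O}(\e^3)$ reduces to the single requirement~\eqref{int-hf-2}, equivalently to finding symmetric $D_{ij}$ satisfying~\eqref{ij} and~\eqref{ijk}.

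First I would construct the candidate explicitly, replacing the $\lambda$'s by $\mu$'s in the ansatz for $d$: set $D_{ii}:=-C_i(R_i)\mu_{i,i}$ and $D_{ij}:=s_{ij}(\mu_i-\mu_j)$ for $i\neq j$, using the \emph{same} $C_i$ and $s_{ij}$. Antisymmetry of $s$ makes $D_{ij}=D_{ji}$, so this defines a genuine local functional $F_2=\int_{S^1}\sum_{i,j}D_{ij}{R_i}_x{R_j}_x\,dx$. Equation~\eqref{ij} then holds trivially, since both of its sides equal $s_{ij}$. It remains to verify~\eqref{ijk}, which I would do by the same case analysis used in the forward direction. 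The diagonal case $j=l=i$ is~\eqref{(1)}, true by construction; the cases $l=j$, $l=i$, and $j=i$ (with the remaining index distinct) reduce, via $C_{j,i}=0$ (as $C_j=C_j(R_j)$) and Tsarev's conditions~\eqref{Tsarev}, to the identities~\eqref{diij} and~\eqref{(3)}, which were already shown to follow from~\eqref{ij} and~\eqref{Tsarev}; and the generic case $\varepsilon(i,j,l)=\pm1$, after substituting~\eqref{anti-symm} and using~\eqref{Tsarev}, collapses exactly to $(s_{lj,i}+s_{ji,l}+s_{il,j})\bigl((\lambda_i-\lambda_l)(\mu_j-\mu_l)-(\lambda_j-\lambda_l)(\mu_i-\mu_l)\bigr)=0$.

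The crux—and the only genuinely new observation—is that $s_{ij}=\phi_{i,j}-\phi_{j,i}$ is a ``curl,'' so its cyclic sum $s_{lj,i}+s_{ji,l}+s_{il,j}$ vanishes identically because mixed partial derivatives commute; hence the generic case of~\eqref{ijk} imposes no further constraint and the construction closes up. The main technical obstacle is bookkeeping: one must check that each implication in the forward derivation is in fact reversible once the matched ans\"atze for $d$ and $D$ are in force, i.e. that the coefficients of $C_i$ in the mixed-index cases cancel using only~\eqref{Tsarev} and the commutativity~\eqref{con4}. Granting this, the constructed $F=F_0+\e^2F_2+\mathcal{O}(\e^3)$ satisfies~\eqref{int-hf-2}, so $\{H,F\}=\mathcal{O}(\e^3)$, and since $F_0$ was an arbitrary generic conservation law, the perturbation is $\mathcal{O}(\e^2)$-integrable.
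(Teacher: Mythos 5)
Your route is genuinely different from the paper's, and it is worth saying what the paper actually does: it never touches the linear system \eqref{ij}--\eqref{ijk} in this direction. Instead it takes the generator $K_1$ of the quasi-triviality (the functional with density $\sum_i C_i(R_i){R_i}_x\log{R_i}_x-C_i(R_i){R_i}_x+\phi_i{R_i}_x$), defines $F_2:=\{F_0,K_1\}$, observes that the same computation which showed $\{H_0,K_1\}=H_2$ applies verbatim with $\lambda_i$ replaced by $\mu_i$ (so the logarithms cancel and $F_2$ is an honest polynomial local functional), and then gets $\{H_0,F_2\}+\{H_2,F_0\}=\{\{H_0,F_0\},K_1\}=0$ from the Jacobi identity and $\{H_0,F_0\}=0$. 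That three-line argument replaces your entire case analysis of \eqref{ijk}. Your plan --- swap $\lambda\mapsto\mu$ in the ansatz for $d$ to build $D$, then verify \eqref{ij} and \eqref{ijk} case by case --- can be made to work, and your key observation (the cyclic sum of the curl $s_{ij}=\phi_{i,j}-\phi_{j,i}$ vanishes identically, so the generic case of \eqref{ijk} imposes nothing) is correct; note that your constructed $F_2$ is, up to the points below, the same functional the paper obtains as $\{F_0,K_1\}$.

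There are, however, two genuine gaps as written. First, you misquote what Lemma~\ref{dab} gives: in Riemann coordinates, \eqref{dab-expression} yields for $i\neq j$ the expression $d_{ij}=-\tfrac12\bigl(C_i\lambda_{i,j}+C_j\lambda_{j,i}\bigr)+\tfrac12\,s_{ij}(\lambda_i-\lambda_j)$, not $d_{ij}=s_{ij}(\lambda_i-\lambda_j)$. Dropping the $C$-cross terms is legitimate only after one proves they can be absorbed into the curl part, i.e.\ that $(C_i\lambda_{i,j}+C_j\lambda_{j,i})/(\lambda_i-\lambda_j)$ satisfies the cyclic closedness identity --- exactly the computation the paper performs (after this lemma) to establish the equivalence of \eqref{sec-h2} and \eqref{dab-expression}. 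Either prove that identity, or carry the cross terms along and set $D_{ij}=-\tfrac12\bigl(C_i\mu_{i,j}+C_j\mu_{j,i}\bigr)+\tfrac12\,s_{ij}(\mu_i-\mu_j)$; in the latter case \eqref{ij} is no longer ``trivial'' but follows from \eqref{con4}. Second, your reduction of \eqref{int-hf-2} to ``finding symmetric $D_{ij}$ satisfying \eqref{ij} and \eqref{ijk}'' silently uses the \emph{sufficiency} of those conditions, whereas the paper only derives \eqref{int-(1)}--\eqref{int-(2)} as necessary consequences of \eqref{int-hf-2}. Sufficiency does hold --- the density of $\{H_0,F_2\}+\{H_2,F_0\}$ contains no $u_{xxx}$ terms, and \eqref{int-(1)}, \eqref{int-(2)} are precisely the conditions (symmetry of the $u_{xx}u_x$ coefficient, and compatibility of the cubic coefficient with its derivative) for such a degree-three density to be a total $x$-derivative --- but this must be said, since it is what lets the verification of \eqref{ij}, \eqref{ijk} conclude integrability. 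With these two points supplied, and the mixed-index cases actually checked against \eqref{con4} and \eqref{Tsarev} as you indicate, your argument closes.
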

\begin{proof}
We have proved that there exist functions $C_i(R_i)$ and $\phi_i(R)$ such that equations \eqref{dab-expression} hold true. And the quasi-triviality is generated by $\e \, K_1+\mathcal{O}(\e^2):$
\beq
K_1 \= \int_{S^1}\sum_{i=1}^n C_i(R_i)\,{R_i}_x \log {R_i}_x-C_i(R_i){R_i}_x+\phi_i(R_1,...,R_n){R_i}_x \, dx.
\eeq
For a generic conservation law $F_0=\int_{S^1} f_0(v)\,dx$ of~\eqref{dispersionless}, denote by $\mu_1,...,\mu_n$ the distinct eigenvalues of the Hessian $(M^\alpha_\beta)$ of $f_0$. The calculations above can be applied to $F_0$, 
which give
\beq
F_2 \:= \{F_0,K_1\} \= \int_{S^1} \, \biggl(-\sum_{i=1}^n C_i(R_i)\, {\mu_i}_x{R_i}_x \+ \frac{1}{2}\,\sum_{i\neq j}\, (\mu_i-\mu_j)\, s_{ij}\, {R_i}_x {R_j}_x\biggr) \, dx \, .
\eeq
Then using the Jacobi identity we obtain
$\{H_0,F_2\} + \{H_2,F_0\} = 0$.
The lemma is proved.
\end{proof}

We proceed to prove that integrability implies quasi-triviality. 
According to Theorem~\ref{thm10}, it suffices to show that the expression~\eqref{sec-h2} and the expression~\eqref{dab-expression} are equivalent. 
Note that in the coordinate chart of the complete Riemann invariants, \eqref{dab-expression} becomes
\beq
d_{ij} \= -\frac{1}{2} \Bigl(C_i(R_i)\lambda_{i,j}+C_j(R_j) \lambda_{j,i}\Bigr) \+ \frac{1}{2} \, \sum_{i\neq j}^n s_{ij}\,(\lambda_i-\lambda_j)\,,
\eeq
where $s_{ij}=\phi_{i,j}-\phi_{j,i}$ for some functions $\phi_i.$
It then suffices to show that $-\frac{1}{2} \bigl(C_i(R_i)\lambda_{i,j}+C_j(R_j) \lambda_{j,i}\bigr)$, $\forall\,i\neq j$ can be absorbed into the term $\frac{1}{2}\sum_{i\neq j}^n s_{ij}\,(\lambda_i-\lambda_j).$ This is true because 
\begin{align}
&\p_{k}\biggl(\frac{C_i(R_i)\lambda_{i,j}+C_j(R_j) \lambda_{j,i}}{\lambda_i-\lambda_j}\biggr) \+ \p_{i}\biggl(\frac{C_j(R_j)\lambda_{j,k}+C_k(R_k) \lambda_{k,j}}{\lambda_j-\lambda_k}\biggr)\nn\\
&\qquad\qquad 
\+ \p_{j}\biggl(\frac{C_k(R_k)\lambda_{k,i}+C_i(R_i) \lambda_{i,k}}{\lambda_k-\lambda_i}\biggr) \= 0 \,,\quad \forall~\varepsilon(i,j,k) \= \pm1\, .
\end{align}
Part I is proved.  Part II then follows from Lemma~\ref{lemquasiimpliesint} and the above proved 
equivalence between~\eqref{sec-h2} and~\eqref{dab-expression}.
The theorem is proved. 
$\hfill\square$

\section{Example}\label{sect4}

The two component irrotational water wave equations in $1+1$ dimensions~\cite{AFM,Zakharov} are given by
\begin{align}
& \int_{-\infty}^\infty e^{-ikx} dx~ \biggl\{i\,
\eta_t \cosh{[k\,\epsilon\,(1+\mu\, \eta)]} - \frac{q_x}{\epsilon} \sinh{[ k\,\epsilon\,(1+\mu\, \eta)]}\biggr\} \= 0 \, , \label{Nlocal_a} \\
& q_t \+ \eta \+ \frac{\mu}{2}
q_x^2 \= \frac{\mu\epsilon^2}{2}\frac{(\eta+\mu\,
q_x\eta_x)^2}{1+\mu^2\epsilon^2\eta_x^2} \+ \frac{\sigma \epsilon^2\eta_{xx}}{(1+\mu^2\epsilon^2\eta_x^2)^{3/2}} \,.\label{Nlocal_b}
\end{align}
Here, $\mu$ and $\sigma$ are constants. For simplicity we will only consider the case $\sigma\equiv0$.
Denote $r=1+\mu\, \eta,~v=\mu\,q_x$. Then we can rewrite \eqref{Nlocal_a}--\eqref{Nlocal_b} 
as the perturbation of a system of Hamiltonian PDEs of hydrodynamic type:
\begin{align}
& r_t=(1+Q)^{-1}\sum_{j=1}^\infty
\f{(-1)^j\epsilon^{2j-2}}{(2j-1)!}\p_x^{2j-1}(r^{2j-1}v), \label{h_ww-1}\\
& v_t=-r_x-vv_x+\f{\epsilon^2}{2}\p_x\Biggl(\f{v\, r_x+(1+Q)^{-1}
\sum_{j=1}^\infty\f{(-1)^j\epsilon^{2j-2}}{(2j-1)!}\p_x^{2j-1}(r^{2j-1}v)}{1+\epsilon^2 r_x^2}\Biggr),\label{h_ww-2}
\end{align}
where $Q$ is an operator defined by $Q:=\sum_{j=1}^{\infty}\f{(-1)^j \epsilon^{2j}}{(2j)!}\p_x^{2j}\circ r^{2j}$.
The dispersionless limit of \eqref{h_ww-1}--\eqref{h_ww-2} was studied by Whitham~\cite{W} and is integrable.
Now we look at the second order approximation of \eqref{h_ww-1}--\eqref{h_ww-2}:
\begin{align}
& r_t \= - \, (rv)_x \+ \epsilon^2\Bigl(-r^2 r_x v_x-\f{1}{3} r^3v_{xx}\Bigr)_x \+ \mathcal{O}(\epsilon^4) \, , \label{h_ww_app-1}\\
& v_t \= - \, r_x \,-\, vv_x \+ \epsilon^2\Bigl(\f{1}{2} r^2v_x^2\Bigr)_x \+ \mathcal{O}(\epsilon^4) \, . \label{h_ww_app-2}
\end{align}
This approximation has the Hamiltonian structure: 
\begin{align}
& (r_t,v_t)^T\= \begin{pmatrix} 0 & \p_x \\ \p_x & 0 \\ \end{pmatrix} \biggl( \frac{\delta H}{\delta r(x)}, \,\frac{\delta H}{\delta v(x)} \biggr)^T,  \label{pab} \\
& H \= H_0 \+ \e^2\,H_2 \+ \mathcal{O}(\e^3), \label{pert-water wave} \\
& H_0 \= \, -\int_{S^1} \frac{1}{2} rv^2+\frac{r^2}{2}\,dx\, ,\qquad H_2 \= \int_{S^1} \frac{1}{6}r^3 v_x^2\,dx \, .
\end{align}

\begin{prp}\label{applicationwater}
The system \eqref{h_ww-1}--\eqref{h_ww-2} is not integrable in the sense of Definition~\ref{integrable}.
\end{prp}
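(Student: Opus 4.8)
The plan is to contradict the necessary condition~\eqref{d1} of Theorem~\ref{thm10}. Since the Hamiltonian~\eqref{pert-water wave} has no $\mathcal{O}(\e)$-term, $\mathcal{O}(\e^1)$-integrability is automatic, so the first genuine obstruction lives at second order. If the full system~\eqref{h_ww-1}--\eqref{h_ww-2} were integrable it would in particular be $\mathcal{O}(\e^2)$-integrable; hence it suffices to show that the specific second-order density $H_2=\int_{S^1}\frac16\,r^3v_x^2\,dx$ violates~\eqref{d1}. Concretely, I would extract the coefficients $d_{ij}$ in the Riemann chart and verify that $d_{ii}/\lambda_{i,i}$ is \emph{not} a function of $R_i$ alone, which is exactly what~\eqref{d1} would require.

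First I would analyse the dispersionless limit of~\eqref{h_ww_app-1}--\eqref{h_ww_app-2}, namely $r_t=-(rv)_x$ and $v_t=-r_x-vv_x$. This is the classical shallow-water system, whose complete set of Riemann invariants is $R_1=v+2\sqrt r$, $R_2=v-2\sqrt r$; a direct check of~\eqref{rie-def} gives the characteristic speeds $\lambda_1=-(v+\sqrt r)$ and $\lambda_2=-v+\sqrt r$. Inverting the change of variables, $v=(R_1+R_2)/2$ and $\sqrt r=(R_1-R_2)/4$, so that $\lambda_1=-(3R_1+R_2)/4$ and $\lambda_2=-(R_1+3R_2)/4$. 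In particular $\lambda_{1,1}=\lambda_{2,2}=-3/4$ is nowhere zero on~$U$, so the standing hypothesis $\lambda_{i,i}\neq0$ of Theorem~\ref{thm10} is met.

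Next I would rewrite $H_2$ in the Riemann chart. Using $v_x=\tfrac12(R_{1,x}+R_{2,x})$ and $r^3=(R_1-R_2)^6/4096$, the density $\tfrac16 r^3v_x^2$ becomes $\frac{(R_1-R_2)^6}{98304}\,(R_{1,x}+R_{2,x})^2$, so that $d_{11}=d_{22}=d_{12}=\frac{(R_1-R_2)^6}{98304}$. Now~\eqref{d1} would demand $d_{ii}=-C_i(R_i)\,\lambda_{i,i}$ for some $C_i$ depending on $R_i$ only; for $i=1$ this forces $C_1(R_1)=-d_{11}/\lambda_{1,1}=\frac{(R_1-R_2)^6}{73728}$, whose right-hand side depends genuinely on $R_2$. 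No function of $R_1$ alone can equal it, which contradicts~\eqref{d1}. By Theorem~\ref{thm10} the perturbation is therefore not $\mathcal{O}(\e^2)$-integrable, and hence the system~\eqref{h_ww-1}--\eqref{h_ww-2} is not integrable.

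The calculations above are routine; the only points requiring care are bookkeeping. I would confirm the Riemann invariants and speeds by verifying~\eqref{rie-def} directly, and I would justify that the representation $H_2=\int_{S^1}\sum_{i,j}d_{ij}R_{i,x}R_{j,x}\,dx$ is unique, so that the $d_{ii}$ read off above are precisely the quantities to be tested against~\eqref{d1} (any $R_{i,xx}$-terms produced along the way can be removed by integration by parts and absorbed into the symmetric $d_{ij}$). The conceptual content is not a long computation but the single structural observation that $d_{ii}$ depends on the difference $R_1-R_2$, which is incompatible with the separation-of-variables requirement in~\eqref{d1}; this is the step I expect to carry the whole argument.
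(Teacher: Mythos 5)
Your proof is correct and follows essentially the same route as the paper: pass to Riemann invariants of the dispersionless shallow-water limit, rewrite $H_2=\int_{S^1}\frac16\,r^3v_x^2\,dx$ in that chart, and observe that the diagonal coefficient is proportional to $(R_1-R_2)^6$, which cannot have the separated form $d_{ii}=-C_i(R_i)\,\lambda_{i,i}$ demanded by~\eqref{d1}. The only cosmetic differences are your normalization $R_{1,2}=v\pm2\sqrt r$ (versus the paper's $v/2\pm\sqrt r$, harmless since~\eqref{d1} is covariant under reparametrizations $R_i\mapsto f_i(R_i)$) and that you invoke the necessary condition~\eqref{d1} of Theorem~\ref{thm10} directly, whereas the paper phrases the same obstruction through the quasi-triviality equation coming from Theorems~\ref{thm10}--\ref{thm1}; the decisive computation is identical.
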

\begin{proof} The Riemann invariants are 
$R_1=v/2+\sqrt{r}$, $R_2=v/2-\sqrt{r}$.
And the eigenvalues are
\beq
\lambda_1=-v-\sqrt{r}=-\frac{3}{2}R_1-\frac{1}{2} R_2,\quad \lambda_2=-v+\sqrt{r}=-\frac{1}{2}R_1-\frac{3}{2}R_2.
\eeq
This gives
$\lambda_{1,1}=\lambda_{2,2}=-3/2$.
According to Theorems \ref{thm10}--\ref{thm1} , the perturbation~\eqref{pert-water wave} is quasi-trivial 
at the second order approximation iff
the following equation has a solution:
\begin{align}
&- \bigl((R_1-R_2)(\phi_{1,2}-\phi_{2,1})\bigr){R_1}_x{R_2}_x\nn\\
&\quad\quad\+\frac{3}{2}\,C_1(R_1)\,{R_1}_x^2\,+\,\frac{3}{2}\,C_2(R_2)\,{R_2}_x^2 \,=\, \frac{(R_1-R_2)^6}{384}({R_1}_x+{R_2}_x)^2.\nn
\end{align}
However, the solution set to this equation is empty. 
The proposition is proved.
\end{proof}

Let us provide additional evidence supporting the already proved statement of Proposition~\ref{applicationwater}, 
which is more straightforward. It is easy to verify that 
system~\eqref{h_ww-1}--\eqref{h_ww-2} has 
four linearly independent conservation laws:
$$
\int_{S^1} r \,dx,~~\int_{S^1} v\,dx,~~\int_{S^1} rv\,dx,~~-H.
$$
We will show these form all the conservation laws for~\eqref{h_ww-1}--\eqref{h_ww-2}.
Indeed, 
only the following {\it four} conservation laws of the dispersionless limit of \eqref{h_ww-1}--\eqref{h_ww-2}
\beq\label{five}
\int_{S^1} r \,dx,~~\int_{S^1} v\,dx,~~\int_{S^1} r v\,dx,~~\int_{S^1} \frac{1}{2} r  v^2+\frac{r^2}{2}\,dx
\eeq
can be extended to conservation laws at the second order approximation for~\eqref{h_ww-1}--\eqref{h_ww-2}. 
To see this, denote $u^1=r$, $u^2=v$, and let
$$
F=F_0+\e^2 F_2+\mathcal{O}(\e^3)=\int_{S^1} f(u) \,dx+\e^2 \int_{S^1} D_{\alpha\beta}(u) \, u^\alpha_x u^\beta_x+\mathcal{O}(\e^3)
$$
be a conserved quantity of \eqref{h_ww-1}--\eqref{h_ww-2} at the second order approximation. Then we have
\begin{align}
& f_{vv}=r\, f_{rr}, \label{int-f}\\
& \mu_1=f_{rv}-\sqrt{r}\, f_{rr},\quad \mu_2=f_{rv}+\sqrt{r}\, f_{rr},\\
& d_{11}=d_{22}=\frac{1}{384}(r_1-r_2)^2,\\
& D_{11}=-\frac{\p_{R_1} (\mu_1)}{576}(r_1-r_2)^6,\quad D_{22}=-\frac{\p_{R_2} (\mu_2) }{576}(r_1-r_2)^6.
\end{align}
Substituting these equations in~\eqref{diij} and using~\eqref{int-f} we find
$f_{rrv}=0$.
It yields five solutions:
\beq
f=r,\quad f= v, \quad f=rv, \quad f=\frac12 r v^2+\frac12  r^2, \quad f=\frac{v^2}{2}+r\log r.
\eeq
However, through one by one verifications only the first four can be (and are indeed) extended to the second order approximation. 



\paragraph{Acknowledgements}
I would like to thank Youjin Zhang, Si-Qi Liu, and Boris Dubrovin for their advisings and helpful discussions. 
I am grateful to Youjin Zhang for suggesting the question about two-component water wave equations to me; I also thank 
Boris Dubrovin for suggesting the general question. 
Part of the work was done in Tsinghua University and SISSA; I thank Tsinghua University and SISSA for excellent working 
conditions and financial supports.

\end{document}